\newcommand{\erdos}{Erd\H{o}s-R\'enyi }
\newenvironment{enumeratea}{\begin{enumerate}[\upshape (a)]}{\end{enumerate}}
\newtheorem{theorem}{Theorem}
\newtheorem{lemma}[theorem]{Lemma}
\newtheorem{definition}[theorem]{Definition}
\renewcommand{\geq}{\geqslant}
\begin{document}
	\begin{centering}
{\LARGE \bf Analysis of Population Functional Connectivity Data via Multilayer Network Embeddings}\\
\vskip 1pc
\author[{James D. Wilson$^*$\footnotetext{$^*$Corresponding Author. Email: jdwilson4@usfca.edu. Phone: (415) 422-6505}\\
		Department of Mathematics and Statistics\\
		University of San Francisco, San Francisco, CA 94117\\
		\vskip 1pc
        ~Melanie Baybay\\
 		Department of Computer Science\\
		University of San Francisco, San Francisco, CA 94117\\
		\vskip 1pc
        ~Rishi Sankar\\
 		Department of Computer Science\\
		University of California, Los Angeles, CA 90095
		\vskip 1pc
        ~Paul Stillman\\
		Department of Marketing\\
		Yale School of Management, New Haven, CT 06511\\
		\vskip 1pc
        ~Abbie M. Popa\\
 		The Data Institute\\
		University of San Francisco, San Francisco, CA 94117\\
		\vskip 2pc
		Action Editor: Filippo Menczer, Accepted for publication in Network Science
		}
\end{centering}






\begin{abstract}
Population analyses of functional connectivity have provided a rich understanding of how brain function differs across time, individual, and cognitive task. An important but challenging task in such population analyses is the identification of reliable features that describe the function of the brain, while accounting for individual heterogeneity. Our work is motivated by two particularly important challenges in this area: first, how can one analyze functional connectivity data over {populations} of individuals, and second, how can one use these analyses to infer group similarities and differences. Motivated by these challenges, we model population connectivity data as a multilayer network and develop the multi-node2vec algorithm, an efficient and scalable embedding method that automatically learns continuous node feature representations from multilayer networks. We use multi-node2vec to analyze resting state fMRI scans over a group of 74 healthy individuals and 60 patients with schizophrenia. We demonstrate how multilayer network embeddings can be used to visualize, cluster, and classify functional regions of the brain for these individuals. We furthermore compare the multilayer network embeddings of the two groups. We identify significant differences between the groups in the default mode network and salience network -- findings that are supported by the triple network model theory of cognitive organization. Our findings reveal that multi-node2vec is a powerful and reliable method for analyzing multilayer networks. Data and publicly available code is available at {https://github.com/jdwilson4/multi-node2vec}.\\
\\
{\bf Keywords}: multilayer networks, network embedding, node2vec, Skip-Gram, functional connectivity, imaging, network neuroscience
\end{abstract}

\tableofcontents

\section{Introduction}\label{sec:intro}
Human cognition is an emergent phenomenon of complex interactions among many different brain regions \cite{bressler2010large,medaglia2015cognitive, sporns2011networks,sporns2014contributions}. Network neuroscience is a common perspective of the brain in which neural connectivity is characterized through network-based models. Such network investigations have revealed general organizing principles of the whole brain, including high modularity \cite{sporns2016modular}, a ``rich-club" of interconnected hub regions \cite{van2011rich, van2012high}, and topologies that demonstrate small-world structure \cite{bassett2006small, achard2006resilient, bassett2006adaptive, he2007small}. These findings have shown, for instance, that the regions of the brain not only exhibit strong clustering, but also enable the brain to minimize wiring costs while maintaining robust transfer and integration of information across regions \cite{bullmore2012economy,fornito2011genetic}. Network investigations have also advanced our understanding of neural processes, such as learning and memory \cite{bassett2011dynamic,bassett2015learning}, cognitive control \cite{cole2012global}, and emotion \cite{kinnison2012network}. Investigations of local subnetwork structure of the brain have revealed consistent architectures that may describe overall functional efficiency \cite{stillman2017statistical, stillman2019consistent}. Several large-scale projects have arisen from network neuroscience, such as the Human Connectome Project \cite{van2012human,van2013wu}, as well as the BRAIN initiative \cite{insel2013nih}. 

Despite the many successes of network neuroscience in understanding the structure and function of the brain, many challenges remain. Our work is motivated by two particularly important challenges: (1) how can one analyze functional connectivity data over \emph{populations} of individuals, and (2) how can one utilize these analyses to infer group similarities and differences. To answer these two questions, we propose analyzing multilayer networks that effectively model the functional organization of each group. Population data of functional connectivity give rise to brain networks that are inherently multilayer -- they vary across time, across person, and across cognitive task \cite{betzel2016multi}. Unfortunately, many network neuroscience strategies are static and consider only a single-layered network representation of the brain. Single-layered analyses neglect heterogeneity among individuals as well as their interdependencies (see \cite{wilson2016community} for a discussion). Multilayer network representations of the brain enable researchers to fully analyze the relationships within and between networks observed over time, person, or task \cite{bassett2011dynamic, bassett2015learning}. 


Multilayer networks model the functional connections between regions of the brain across a population of individuals. Multilayer networks themselves are challenging data objects to analyze, and there is a lot of current research devoted to handling these challenges (see \cite{kivela2014multilayer} for a recent survey). In this paper, we propose a fast and scalable algorithm, called \emph{multi-node2vec}, that learns the nodal features from complex multilayer networks through the Skip-gram neural network model. By embedding multilayer networks of the brain to nodal features, we enable the direct analysis of the regions of the brain that are representative of the group under study.

We apply multi-node2vec to a multilayer brain network representing the functional connectivity of 74 healthy individuals and 70 patients with schizophrenia who underwent resting state fMRI. We demonstrate how to utilize the results of multi-node2vec for three primary objectives: (i) visualization and clustering of these regions into communities of similar features, (ii) classification of regions into anatomical regions of interest in the brain, and (iii) comparing two populations of individuals. We find that multi-node2vec identifies feature embeddings that closely match the functional organization of healthy individuals, and also provides a powerful strategy for comparing groups of individuals. Our proposed embedding technique provides a valuable step in automatically learning neurological variation among brains, including individual differences and disease. 


\subsection{Related Work}

Feature engineering is a common and important learning task in statistics and machine learning. Traditionally, feature engineering for networks, often referred to as network embedding, has amounted to manually describing summaries of networks based on a collection of user-selected network properties, like structural importance or subgraph counts \cite{gallagher2010leveraging, henderson2011s}. A similar strategy has been applied to multilayer networks, where chosen features attempt to quantify the within and between layer relationships among nodes \cite{boccaletti2014structure, kivela2014multilayer}. In contrast to these approaches, the multi-node2vec algorithm automatically learns important continuous features of multilayer networks and requires no user input on what properties to capture. 

Feature engineering techniques have been extensively used to identify low-rank representations of multivariate data. In this setting, the data matrix $\mathbf{X}$ is an $n \times p$ matrix whose rows are $n$ independent observations measured on $p$ features. Dimension reduction techniques are particularly important when the data is high-dimensional -- when $p > n$ -- as traditional statistical inference is often no longer reliable \cite{buhlmann2011statistics}. Singular value decompositions, principal components analysis, and spectral clustering, for instance, are well-studied decomposition techniques that have been applied to a number of high-dimensional problems ranging from topic modeling to micro-array analysis. These techniques rely on the spectral decomposition of $\mathbf{X}$, its empirical covariance matrix, and the graph Laplacian of a similarity matrix on the columns of $\mathbf{X}$, respectively. Though these methods are known to provide accurate representations of $\mathbf{X}$, they face a drawback in computational complexity for large $p$ due to matrix inversion, which can be prohibitive for especially high-dimensional problems. In Section 6, we show that multi-node2vec is in fact an approximation to closed-form implicit matrix factorization. 

There have been many feature learning techniques for static networks developed in the past decade. The latent space model from \cite{hoff2002latent}, for example, is a common model-based embedding technique that embeds the observed network onto Euclidean space - typically onto two-dimensions. Our current work is most closely related to the automatic feature learning techniques LINE \cite{tang2015line}, DeepWalk \cite{perozzi2014deepwalk}, and node2vec \cite{grover2016node2vec}. We briefly discuss these here but refer the reader to \cite{goyal2017graph} for a recent review of node embedding techniques for static networks. LINE, DeepWalk, and node2vec each learn features of a node from the neighborhoods of the node in the observed graph. LINE learns d-dimensional features by an objective function that preserves first and second order network properties. DeepWalk and node2vec each learn D-dimensional features using the Skip-gram neural network model, which minimizes a log-likelihood loss function that characterizes relationships from node neighborhoods in the observed graph. The Skip-gram  model was originally developed for learning efficient representations of words in a large document of text \cite{mikolov2013efficient, pennington2014glove}. The first application of the Skip-gram model was in the {word2vec} algorithm \cite{mikolov2013distributed}, where it was used to estimate a word's features through the log-likelihood cost minimization from the prediction of that word's context. DeepWalk, node2vec, and multi-node2vec differ in the way they collect node neighborhoods. DeepWalk extracts neighborhoods using truncated random walks. Node2vec performs second random walks based on hyperparameters that guide the likelihood of visiting nodes either closer to or further away from previously visited nodes. The development of our algorithm is motivated by the recent success of the node2vec algorithm on consensus matrices of structural MRI, \cite{rosenthal2018mapping}. The multi-node2vec algorithm, however, directly handles the analysis of populations of functional connectivity data. Multi-node2vec is also random walk based and can be thought of as a generalization of the original DeepWalk and node2vec algorithms. Utilizing Laplacian dynamics like that discussed in \cite{mucha2010community}, we incorporate a walk parameter that dictates the probability of moving from one layer to the next. 

Other recent work has focused on generative network models that model populations of networks, including the the random effects stochastic block model \cite{paul2018random}, the multi-subject stochastic block model \cite{pavlovic2019multi}, the hierarchical latent space model \cite{wilson2020hierarchical}, as well as the edge-based logistic model from \cite{simpson2019mixed}. These three models each assume independence of the edges within and across individuals. Even so, estimation methods for these models are sometimes prohibitive and typically require small network representations (on the order of 10s of nodes) for each individual.


The community detection task of partitioning the nodes of a multilayer network into densely connected subgroups, or communities, can be viewed as multilayer embedding. Specifically, the results of a community detection algorithm is an $N \times D$ matrix $\mathbf{F}$, where the $v$th row $\mathbf{f}_v$ is a binary vector that indicates which community(ies) the node $v$ is contained. The development of multilayer community detection methods is still in its early stages, but several useful techniques have been developed over the past decade \cite{de2015identifying, mucha2010community, stanley2015clustering, wilson2016community}. Though not the focus of this paper, it would be interesting to fully explore the use of communities as features for regression and other machine learning tasks in future work.

\section{Multilayer Embedding with multi-node2vec}

In resting state functional connectivity, network models are constructed by gauging the degree to which two regions' time-series activity is related to one another. The intuition is that the greater two regions are functionally connected, the more their time-series' should co-activate. In the present study, we model the strength of connection between two regions based on the correlation between the two regions' activity during a resting state fMRI scan (i.e., when participants have no task except to stay awake \cite{bullmore2009complex, smith2011network}. We can subsequently apply the multi-node2vec technique to identify local features of the brain from a group of individuals.

%

A multilayer network of length $m$ is a collection of networks or graphs $\{G_1, \ldots, G_m\}$, where the graph $G_{\ell}$ models the relational structure of the $\ell$th layer of the network. Each layer $G_\ell = (V_{\ell}, W_\ell)$ is described by the vertex set $V_\ell$ that describes the units, or actors, of the layer, and the \emph{intra-layer} edge weights $W_\ell = \{w_{\ell}(u,v): u, v \in V_\ell\}$ that describes the strength of relationship between the nodes. Furthermore, there is a collection of \emph{inter-layer} edge weights $IL := \{w_{\ell, \ell'}(u,v): u \in V_\ell, v \in V_{\ell'}\}$ that describe relationships between nodes of differing layers. Note that layers in the multilayer sequence may be heterogeneous across vertices, edges, and size. In the case of population studies of functional connectivity, each layer $G_\ell$ represents the correlation network arising from resting state fMRI for individual $\ell$. Denote the set of unique nodes in $\{G_1, \ldots, G_m\}$ by $\mathcal{N}$, and let $N = |\mathcal{N}|$ denote the number of nodes in that set. Throughout the remainder of this paper, to signify the unique node set $\mathcal{N}$ we represent multilayer networks with $m$ layers and node set $\mathcal{N}$ as $\mathbf{G}_{\mathcal{N}}^{m}$.

Multilayer networks are inherently complex and high-dimensional. Without further assumptions on $\mathbf{G}_{\mathcal{N}}^{m}$, inference on $\mathcal{N}$ necessitates the modeling of $N^2$ (possibly dependent) edge variables, which is computationally challenging even for moderately sized $N$. In light of this challenge, the aim of the current work is to learn an interpretable low-dimensional feature representation of the nodes in a multilayer network. In particular, we seek a $D$-dimensional representation

\begin{equation} \label{eq:lowdimf} 
\mathbf{F}: \mathcal{N} \rightarrow \mathbb{R}^D,
\end{equation}

\noindent where $D < < N$. The function $\mathbf{F}$ can be viewed as an $N \times D$ matrix whose rows $\{\mathbf{f}_v: v = 1, \ldots, N \}$ represent the feature space of each node in $\mathcal{N}$. 

\subsection{Maximum Likelihood Formulation}
Let $\mathbf{G}_{\mathcal{N}}^{m}$ be an observed multilayer network with $m$ layers and the set of unique nodes $\mathcal{N}$. Our aim is to learn $D$ representative features of $\mathcal{N}$ given by the matrix $\mathbf{F}$ in (\ref{eq:lowdimf}). This learning task can be formulated as a problem of maximum likelihood estimation. To see this, one can view $\mathbf{G}_{\mathcal{N}}^{m}$ as a realization of a random graph on the node set $\mathcal{N}$ whose joint probability distribution is dictated by the feature matrix $\mathbf{F}$. We calculate an estimator for $\mathbf{F}$ in (\ref{eq:lowdimf}), $\widehat{\mathbf{F}}$, that maximizes the joint likelihood 

\begin{equation}\label{eq:likelihood} \mathbb{L}(\mathbf{F} \mid \mathbf{G}_{\mathcal{N}}^{m}) = \mathbb{P}(\mathbf{G}_{\mathcal{N}}^{m} \mid \mathbf{F}), \end{equation}

\noindent where $\mathbb{P}$ is the joint distribution of a multilayer graph with $m$ layers and unique node set $\mathcal{N}$ given the feature representation $\mathbf{F}$. In general, maximization of (\ref{eq:likelihood}) is computationally intractable. We therefore make two simplifying assumptions about the joint distribution $\mathbb{P}$.
Our assumptions rely upon a suitable definition of a \emph{multilayer neighborhood}. Defining the neighborhood of a node is related to the problem of defining the context of a word in a large document from natural language processing. In static unweighted networks, the neighborhood of the node $u$ is often defined as the collection of nodes that share an edge with $u$. This definition is motivated by the homophily principle \cite{mcpherson2001birds}, which posits that nodes with similar features are highly connected to one another in the network. In many cases this definition of a neighborhood is restrictive. This is particularly true when the observed network is only partially observed or when the edges of the network are generated from some underlying noisy process. We instead define a multilayer neighborhood of node $u$ based on a dynamic process across the network. Our construction is a generalization of the random walk constructions from \cite{grover2016node2vec, perozzi2014deepwalk, tang2015line} and is analogous to the defining of communities via Laplacian dynamics as in \cite{lambiotte2008laplacian, mucha2010community}. To be specific, we define the neighborhood of node $u$ as the collection of vertices that are visited over a random walk on the multilayer network ${\bf G}_{\mathcal{N}}^{m}$. We make this more formal below when describing the neighborhood search procedure of the algorithm.

%
Once the multilayer neighborhood of each node has been defined, we make two simplifying assumptions given the feature matrix $\mathbf{F}$. First, we assume that the joint distribution characterizing ${\bf G}_{\mathcal{N}}^{m}$ is the same as the distribution characterizing the collection of neighborhoods in ${\bf G}_{\mathcal{N}}^{m}$. This assumption is reasonable if we believe that the features $\mathbf{F}$ provide the same information as the multilayer network itself. Second, given the feature matrix $\mathbf{F}$ we assume that the neighborhood of a node $v$ depends only on its own feature representation, $\mathbf{f}_v$ and given this representation is independent of the neighborhoods of other nodes $u \in \mathcal{N}$. These assumptions are the same as those made for the node2vec algorithm for static networks \cite{grover2016node2vec} and are analagous to those made for word2vec, which assumes that the joint probability distribution of a collection of text can be characterized by the distribution of the collection of conditionally independent word contexts given each word's feature representation \cite{mikolov2013efficient}. 

With the conditional independence assumptions of the neighborhoods given $\mathbf{F}$, maximizing the joint likelihood of $\mathbf{F}$ given the {entire} network $\mathbf{G}_{\mathcal{N}}^{m}$ reduces to the task of identifying the features $\mathbf{f}_v$ given the {neighborhood} of $v$ in $\mathbf{G}_{\mathcal{N}}^{m}$. Let $Ne(u)$ denote the neighborhood of node $u$, namely the collection of nodes that are linked to $u$. Given the neighborhood of each node, the likelihood from (\ref{eq:likelihood}) simplifies to

\begin{align}\label{eq:skip_gram}
	\mathbb{L}(\mathbf{F} \mid \mathbf{G}_{\mathcal{N}}^{m}) 
	& = \prod_{u \in \mathcal{N}}\mathbb{P}\left(\text{Ne}(u) \mid \mathbf{f}_u\right).
\end{align}

As it remains a challenging task to quantify the dependence between the neighborhoods of differing layers, the maximization of (\ref{eq:skip_gram}) is still computationally difficult. Thus, we define a family of multilayer graphs for which this maximization is feasible. It turns out that we can define such a family by assuming minimal conditional independence assumptions given the representation $\mathbf{F}$, described as follows. Let $\mathcal{G}_{\mathcal{N}}^m$ denote the family of multilayer graphs whose members are random graphs with $m$ layers and unique nodes $\mathcal{N}$. For every member of $\mathcal{G}_{\mathcal{N}}^m$, assume that the following hold

\begin{itemize}
	\item[(A1)] For all $u \in \mathcal{N}$, $ \mathbb{P}(\text{Ne}(u) \mid \mathbf{f}_u) = \prod_{v \in \text{Ne}(u)} \mathbb{P}(v \mid \mathbf{f}_u) $
	\vskip .5pc
	\item[(A2)] Let $u \in \mathcal{N}$. For every $v \in \text{Ne}(u)$, $\mathbb{P}(v \mid \mathbf{f}_u) = \mathbb{P}(u \mid \mathbf{f}_v).$
\end{itemize}

Assumption (A1) characterizes the local conditional independence among nodes in the neighborhoods of a node $v$ given its feature representation, $\mathbf{f}_v$. Assumption (A2) enforces a symmetric effect of neighboring nodes in their feature space. A consequence of (A2) is that for any node $v$ that is a neighbor of $u$, the following relationship holds

$$\mathbb{P}(v \mid \mathbf{f}_u) = \dfrac{\exp\{\mathbf{f}_v^T \mathbf{f}_u \}}{\displaystyle\sum_{w \in \mathcal{N}} \exp\{ \mathbf{f}_w^T \mathbf{f}_u\}}.$$

 
If the observed graph $\mathbf{G}_{\mathcal{N}}^{m}$ is a realization of a multilayer random graph from the family $\mathcal{G}_{\mathcal{N}}^m$ under which assumptions (A1), and (A2) hold, then (\ref{eq:skip_gram}) can be expressed as

\begin{equation}\label{eq:simplified_likelihood}
	\mathbb{L}(\mathbf{F} \mid \mathbf{G}_{\mathcal{N}}^{m}) = \prod_{u \in \mathcal{N}} \prod_{v \in \text{Ne}(u)} \dfrac{\exp\{\mathbf{f}_v^T \mathbf{f}_u \}}{\displaystyle\sum_{w \in \mathcal{N}} \exp\{\mathbf{f}_w^T\mathbf{f}_u\}}.
\end{equation}

Maximizing (\ref{eq:simplified_likelihood}) is equivalent to maximizing the following log-likelihood

\begin{equation}\label{eq:log_likelihood}
\mathcal{L}(\mathbf{F} \mid \mathbf{G}_{\mathcal{N}}^{m}) = \displaystyle\sum_{u \in \mathcal{N}} \displaystyle\sum_{v \in \text{Ne}(u)} \left[\mathbf{f}_v^T \mathbf{f}_u - \log(Z_{u})\right],
\end{equation}

\noindent where $Z_u = \sum_{w \in \mathcal{N}} \exp\{\mathbf{f}_w^T \mathbf{f}_u\}$ is a normalization constant for the node $u$. Following the approach of \cite{grover2016node2vec, mikolov2013efficient}, we approximate $Z_u$ using negative sampling. We note however that Markov chain Monte Carlo sampling methods could also be used to approximate $Z_u$ as in \cite{wilson2017stochastic, denny2017gergm}. The use of Skip-gram with negative sampling is appealing for two reasons: (i) the algorithm is fast and scalable to large multilayer networks, and (ii) the strategy is closely related to matrix factorization \cite{levy2014neural, qiu2018network} as we will see in Section 5.

Given an observed multilayer network $\mathbf{G}_{\mathcal{N}}^{m}$, multi-node2vec is an approximate algorithm that estimates $\mathbf{F}$ through maximization of the log likelihood function in (\ref{eq:log_likelihood}). The algorithm consists of two key steps. First, the {\bf NeighborhoodSearch} procedure identifies a collection of $s$ neighborhoods of length $l$ for $\mathbf{G}_{\mathcal{N}}^{m}$ through second order random walks on the network. The {\bf NeighborhoodSearch} procedure depends on three hyperparameters - $p$, $q$, and $r$ - that dictate the exploration of the random walk away from the source node and the tendency to traverse layers. Once a collection of neighborhoods or \emph{BagOfNodes} have been identified, the log-likelihood in (\ref{eq:log_likelihood}) is optimized in the {\bf Optimization} step using stochastic gradient descent on the two-layer Skip-gram neural network model of context size $k$. The result of the {\bf Optimization} procedure is a $D$-dimensional feature representation $\mathbf{F}$. Figure \ref{fig:toy} provides an illustration. We describe the {\bf NeighborhoodSearch} and {\bf Optimization} procedures in more detail next.


\begin{figure*}[ht]
	\centering
    \includegraphics[trim = 0cm 0cm 0cm 0cm, clip = TRUE, width = 0.9\linewidth]{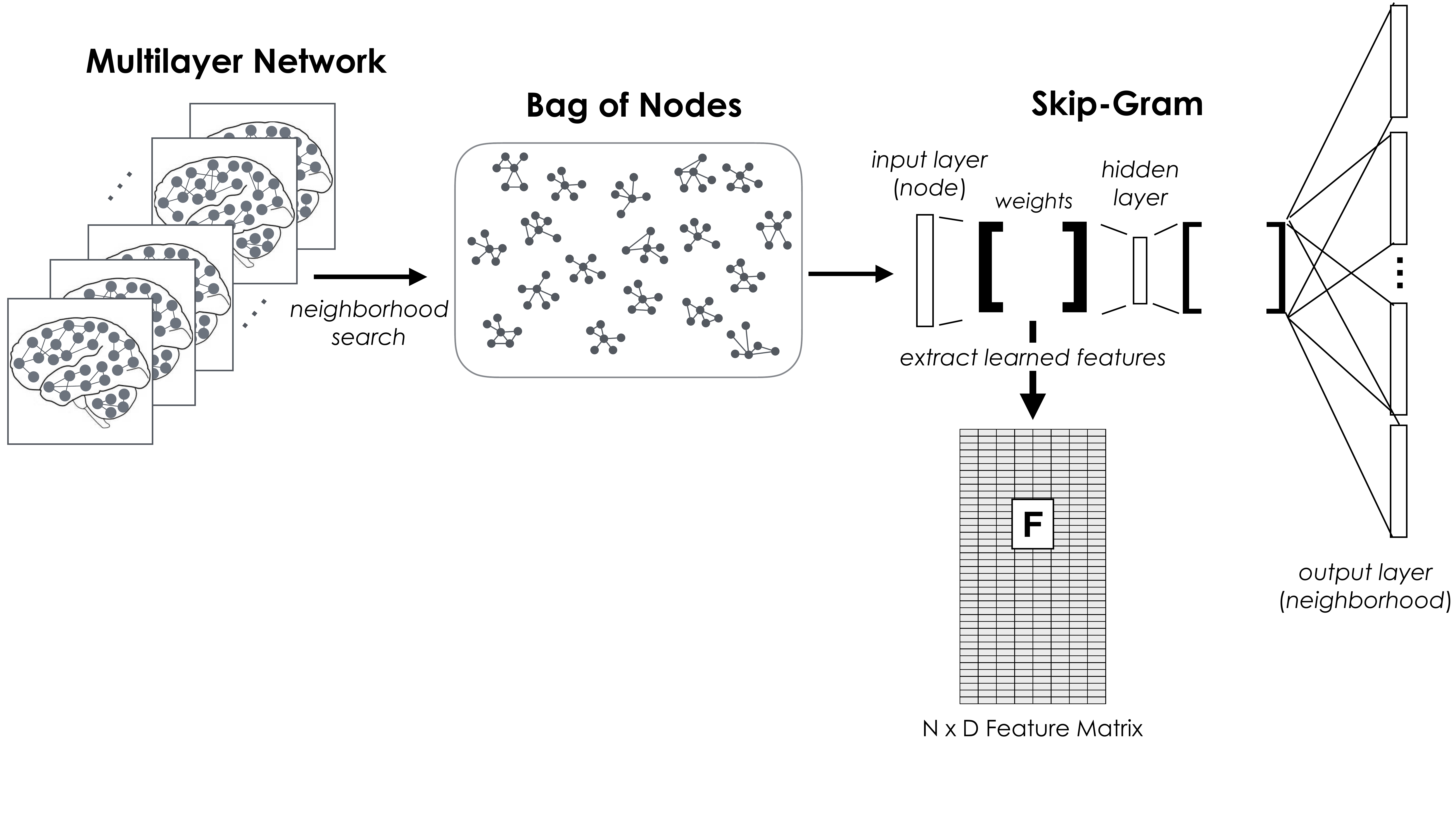}
	\caption{\label{fig:toy} Demonstration of the multi-node2vec algorithm. Beginning with a multilayer network (left), one first identifies a collection of multilayer neighborhoods (Bag of Nodes) via the {\bf NeighborhoodSearch} procedure. Next, the {\bf Optimization} procedure calculates the maximum likelihood estimator $\mathbf{F}$ through the use of the Skip-Gram neural network model (right) on the identified Bag of Nodes.}
\end{figure*}

\subsection{The NeighborhoodSearch Procedure}
Multi-node2vec begins by parsing a multilayer network into a collection of neighborhoods for each unique node in $\mathcal{N}$. The {\bf NeighborhoodSearch} procedure identifies this collection of neighborhoods, or \emph{BagofNodes}, using $s$ truncated second order random walks of length $l$. Without loss of generality, suppose that node labels among layers are registered in the sense that node $u$ in vertex set $V_{\ell}$ represents the same actor as node $u$ in vertex set $V_{\ell'}$. To construct the random walk, we consider the collection of weights $\{w_{\ell, \ell'}(u,v): \ell, \ell' \in 1, \ldots, m; u, v \in \mathcal{N}\}$, where $w_{\ell, \ell'}(u,v)$ defines the edge weight between node $u$ from layer $\ell$ and node $v$ from layer $\ell'$. Thus the collection of edge weights $\{w_{\ell, \ell'}(\cdot, \cdot): \ell \neq \ell'\}$ represent the \emph{inter-layer edges}; whereas, the collection $\{w_{\ell, \ell'}(\cdot, \cdot): \ell = \ell'\}$ represent the \emph{intra-layer edges} in the multilayer network.

For an observed multilayer network and its edge weights defined as above, the {\bf NeighborhoodSearch} procedure identifies $s$ neighborhoods using second order random walks over the nodes and layers of length $\ell$, constructed as follows. Let $u_i$ be the $i$th node visited by the random walk and $\ell_i$ the corresponding layer. Suppose, without loss of generality, that the initial pair $(u_1, \ell_1)$ is chosen uniformly at random. Subsequent vertex, layer pairs are visited according to the conditional probability

\begin{equation} \label{eq:cond_prob}\mathbb{P}(u_i = x, \ell_i = \ell' \mid u_{i-1} = v, \ell_{i - 1} = \ell) = \dfrac{\pi_{v,x,\ell,\ell'}}{Z}, \hskip .5pc w_{\ell, \ell'}(v, x) > 0 \end{equation}

\noindent where $\pi_{v,x,\ell,\ell'}$ is the unnormalized transition probability of moving from vertex-layer pair $(v, \ell)$ to pair $(x, \ell')$, and $Z$ is a normalizing constant. We set $\pi_{v,x,\ell,\ell'}$ as a function of the walk parameters $p, q,$ and $r$ as follows

\begin{equation} \label{eq:transition_probs} \pi_{v,x,\ell,\ell'} = \alpha_{pqr}(t, x, \ell, \ell') \cdot w_{\ell, \ell'}(v, x). \end{equation}

The $\alpha_{pqr}(t, x, \ell, \ell')$ term acts as a search bias on the observed weights that depends on the previously traversed edge $(t,v)$. That is, the walk now resides at node $v$ having just traveled from node $t$ and the next node that the random walk visits depends on (a) the distance $t$ is from the future node, and (b) whether there is a layer transition. Let $d_{\ell}({t, x})$ denote the shortest path distance between nodes $t$ and $x$ in layer $\ell$. To account for layer transitions, we further decompose $\alpha_{pqr}(t, v, x, \ell, \ell')$ as

\begin{equation} \label{eq:linear_function} \alpha_{pqr}(t, x, \ell, \ell') = \beta_{pq}(t, x)~\mathbb{I}(\ell' = \ell) + \gamma_r(v, x)~\mathbb{I}(\ell' \neq \ell), \end{equation}

\noindent where $\beta_{pq}(t,x) = p^{-1} \mathbb{I}(d_\ell(t,x) = 0) + \mathbb{I}(d_\ell(t,x) = 1) + q^{-1} \mathbb{I}(d_\ell(t,x) = 2)$ and $\gamma_r(v, x) = r^{-1} ~ \mathbb{I}(x = v)$. The $\beta_{pq}(t,x)$ term controls the rate at which the random walk explores and leaves the neighborhood of a node within layer $\ell$. This quantity is the same as that specified for static networks in node2vec and has been shown to identify neighborhoods that interpolate between outcomes of breadth first search and depth first search. The return parameter $p$ controls the likelihood of revisiting the same node, layer pair; whereas, the in-out parameter $q$ controls exploration of the walk in layer $\ell$. The $\gamma_r(v, x)$ term controls the rate at which a random walk transitions from one layer to another. Setting the layer walk parameter $r$ to be large ($> max(p, q, 1)$) ensures little layer-to-layer exploration. Setting $r$ in this way encourages independent neighborhood sampling  across layers. On the other hand, setting $r$ to be small ($< min(p,q,1)$) promotes exploration among layers, and the resulting neighborhoods will reflect dependency among the layers. 

Once the parameters $s$, $l$, $p$, $q$, and $r$ have been chosen, $s$ random walks of length $l$ are performed on the nodes of the observed multilayer network using transition probabilities from (\ref{eq:cond_prob}). These $s$ samples serve as the \emph{BagofNodes} from which the nodal features are learned.

\subsection{Optimization}
For a given dimension size $D$, a context size $k$, and the collection of neighborhoods from the {\bf NeighborhoodSearch} step, multi-node2vec then minimizes the cost of (\ref{eq:log_likelihood}) using stochastic gradient descent and the Skip-gram two-layer neural network model. For each node the normalization constant $Z_u$ is approximated using negative sampling. The Skip-gram model iteratively updates the matrix $\mathbf{F}$ in the following manner. Each node is encoded as a one-hot vector and provided as the input layer to a 2-layer neural network from which the neighborhood of the node is predicted. Applying the log-likelihood $\mathcal{L}$ as a cost function, the error of the prediction is calculated. Partial derivatives of the cost function with respect to the rows of each of the intermediate weight matrices are calculated and updated using stochastic gradient descent to minimize cost. This procedure is repeated across all nodes in $\mathcal{N}$ until the cost function can no longer be reduced. After learning from each of the neighborhoods in our bag of nodes, we extract the model’s $node$ $embeddings$ - the $N \times D$ representation weight matrix associated with Skip-gram's input layer. 

This optimization is analogous to that of the node2vec algorithm, but in our application the weight matrices of the two layer neural network are $D$-dimensional representations of the unique nodes $\mathcal{N}$ and thus account for the dependence among layers in the multilayer network. It should be noted that multi-node2vec is an approximate algorithm that relies upon the normalizing constants $\{Z_u\}$, as well as the approximate optimization of stochastic gradient descent. Though not the focus of this paper, there has been a lot of recent work investigating the optimality landscape of gradient descent methods (see for example \cite{lee2016gradient}), which provides promising theoretical justification for its use. 

The choice of $k$ directly affects the amount of information one gains for each node but its value depends on the sparsity of the observed network. Large values of $k$ introduce undesired noise to the identified neighborhoods; whereas, values of $k$ that are too small result in neighborhoods that do not contain significant information about the neighborhoods in the network. We found that setting $k$ near the average degree of the network provided the best results in our numerical studies. In the case that the observed network is either densely connected or contains few layers, the neighborhoods for each node may not contain sufficient information to inform the desired features. In such scenarios, it may be desirable to sample multiple neighborhoods for each node. Thus, we include an optional parameter $a$ that specifies the minimum number of samples generated for each node. Unless otherwise specified, we set $a = 1$ in our numerical studies. Finally, the dimensionality parameter $D$ should be chosen to provide sufficient information about the multilayer network while greatly reducing the total number of nodes $N$, though it is an open problem to understand an optimal dimension to represent general static networks. 

\section{Efficient Implementation of multi-node2vec}

Consider a multilayer functional connectivity network $\mathcal{G}_\mathcal{N}^m$ with $N$ unique nodes and $m$ layers and non-negative edge weights. By construction, multi-node2vec requires the storage of $O(m N^2 + N m^2)$ different edge weights, since there are $O(m N^2)$ intra-layer edges and $O(N m^2)$ inter-layer edges. This can quickly overwhelm computational resources when the number of layers or unique nodes is large. It turns out that multi-node2vec can be applied by only storing $O(N^2)$ values, which greatly improves the efficiency of the algorithm. We begin by analyzing the relationship of multi-node2vec with the node2vec and DeepWalk algorithms. To do so, we need a notion of equivalence between two stochastic algorithms. For this purpose, we consider the stochastic equivalence of two algorithms, defined as follows.

\begin{definition}
	Let $A_1$ and $A_2$ be two stochastic algorithms, each with the same set of possible outcomes $\Omega$. That is, for fixed input data $X$, $A_k$ is a random function that maps $X$ to an outcome $o \in \Omega$: $A_k(X) \rightarrow o \in \Omega$. Define $\mathbb{P}_k$ as the probability mass function characterizing the probability of each possible outcome of $A_k$:
	$\{\mathbb{P}_k(A_k(X) = o): o \in \Omega\}$. $A_1$ and $A_2$ are said to be {\bf stochastically equivalent} if $\mathbb{P}_1 = \mathbb{P}_2$.
\end{definition}

Let $\mathbb{A}$ denote the $N \times N$ aggregate adjacency matrix of the nodes $\mathcal{N}$ with entries $\mathbb{A}_{u,v} = \sum_{\ell}\sum_{\ell'} w_{\ell, \ell'}(u,v)$. Define the adjusted version of $\mathbb{A}$, $\widetilde{\mathbb{A}}(r)$, as the $N \times N$ matrix with entries

$$\widetilde{\mathbb{A}}_{u,v}(r) = r^{-1}\sum_{\ell \neq \ell'}w_{\ell, \ell'}(u,v) + \sum_{\ell}w_{\ell, \ell}(u,v), \hskip 1pc u,v \in \mathcal{N}.$$

Note that $\widetilde{\mathbb{A}}(r) = \mathbb{A}$ when $r = 1$. One can view the matrix $\widetilde{\mathbb{A}}(r)$ as an adjusted adjacency matrix whose edge weights depend on the layer walk parameter $r$. Write $\widetilde{\mathbf{G}}_{\mathcal{N}}{(r)}$ as the graph with nodes $\mathcal{N}$ and edge weights specified by the adjacency matrix $\widetilde{\mathbb{A}}(r)$.

The following lemma relates multi-node2vec with node2vec and DeepWalk and shows under what conditions they are stochastically equivalent in terms of the walk parameters $p, q,$ and $r$. 

\begin{lemma}\label{thm:equiv}
	Let ${\mathbf{G}}_{\mathcal{N}}^m$ be an observed multilayer network and let $\widetilde{\mathbf{G}}_{\mathcal{N}}{(r)}$ be its adjusted aggregate network. Suppose that the parameters $D, k, s, l$ are held constant. Then the following hold
	\begin{enumeratea}
		\item For all $p, q, r > 0$, the application of multi-node2vec to $\mathbf{G}_{\mathcal{N}}^m$ is stochastically equivalent to the application of node2vec to $\widetilde{\mathbf{G}}_{\mathcal{N}}{(r)}$.
		\item If $p = q = 1$, the application of multi-node2vec to $\mathbf{G}_{\mathcal{N}}^m$ is stochastically equivalent to the application of DeepWalk to $\widetilde{\mathbf{G}}_{\mathcal{N}}{(r)}$.
	\end{enumeratea}
\end{lemma}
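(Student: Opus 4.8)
The plan is to prove stochastic equivalence directly from the definition: for each pair of algorithms I must exhibit equality of the probability mass functions $\mathbb{P}_1 = \mathbb{P}_2$ over the common outcome space $\Omega$ of feature matrices $\mathbf{F}$. The natural decomposition is to split each algorithm into its two constituent stages, \textbf{NeighborhoodSearch} and \textbf{Optimization}. Since $D, k, s, l$ are held fixed, the \textbf{Optimization} stage is, conditional on a fixed \emph{BagofNodes}, an identical stochastic map (same Skip-gram architecture, same negative-sampling and SGD law) across all three algorithms. Hence $\mathbb{P}(\mathbf{F} = o)$ factors through the conditional law of $\mathbf{F}$ given the bag of nodes, and by the law of total probability it suffices to show that multi-node2vec on $\mathbf{G}_{\mathcal{N}}^m$ and node2vec on $\widetilde{\mathbf{G}}_{\mathcal{N}}(r)$ induce the \emph{same distribution over bags of nodes}. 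Because each bag is a collection of $s$ independent walks of length $l$ recording only the visited nodes, this in turn reduces to matching the law of a single node-sequence.

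First I would match the one-step transition kernels. For node2vec on $\widetilde{\mathbf{G}}_{\mathcal{N}}(r)$, the unnormalized weight of stepping from $v$ to $x$ having arrived from $t$ is $\beta_{pq}(t,x)\,\widetilde{\mathbb{A}}_{v,x}(r)$. For multi-node2vec the walk lives on vertex-layer pairs, so the induced weight of moving from $v$ to $x$ is obtained by summing $\pi_{v,x,\ell,\ell'}$ over the layer indices. Using the split $\alpha_{pqr}=\beta_{pq}\,\mathbb{I}(\ell'=\ell)+\gamma_r\,\mathbb{I}(\ell'\neq\ell)$ together with $\gamma_r(v,x)=r^{-1}\mathbb{I}(x=v)$, the intra-layer terms contribute $\beta_{pq}(t,x)\sum_{\ell}w_{\ell,\ell}(v,x)$ and the inter-layer terms contribute $r^{-1}\sum_{\ell\neq\ell'}w_{\ell,\ell'}(v,x)$ (nonzero only when $x=v$). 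The crucial bookkeeping observation is that when $x=v$ the walk has just traversed the edge $(t,v)$, so $d_\ell(t,v)=1$ and therefore $\beta_{pq}(t,v)=1$; this lets the $\gamma_r$ (inter-layer) contribution be absorbed with the same leading coefficient as the intra-layer contribution. Collecting terms recovers exactly $\beta_{pq}(t,x)\,\widetilde{\mathbb{A}}_{v,x}(r)$, where the $r^{-1}$ weighting of the inter-layer block of $\widetilde{\mathbb{A}}(r)$ is precisely matched by the $r^{-1}$ in $\gamma_r$. After normalizing, the induced node-transition probabilities coincide with those of node2vec on $\widetilde{\mathbf{G}}_{\mathcal{N}}(r)$, proving (a) at the level of a single step and hence, by the Markov property and the independence of the $s$ walks, for the whole bag of nodes.

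The hard part is the hidden-layer issue: the multi-node2vec walk is Markov on vertex-layer pairs, whereas the recorded node process is a priori only a projection (a deterministic function) of it, so one must verify that forgetting the layer leaves a genuine Markov chain on $\mathcal{N}$ whose kernel is the one computed above. Concretely, I must check that the per-step normalizing constant $Z$ and the layer-summation are consistent, so that the conditional law of the next node given the current node does not secretly depend on the retained layer coordinate; this is where the registered-node / diagonal inter-layer structure and the definition of $\widetilde{\mathbb{A}}(r)$ do the real work, ensuring the layer coordinate is a nuisance variable that integrates out cleanly. I would isolate this consistency statement as the main lemma of the argument and treat the surrounding normalization as routine.

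Finally, part (b) follows from (a) with essentially no extra work. Setting $p=q=1$ makes $\beta_{pq}(t,x)=\mathbb{I}(d_\ell(t,x)=0)+\mathbb{I}(d_\ell(t,x)=1)+\mathbb{I}(d_\ell(t,x)=2)\equiv 1$ on every admissible candidate $x$ (each neighbor $x$ of $v$ lies at distance $0$, $1$, or $2$ from $t$), so the search bias vanishes and the node2vec walk on $\widetilde{\mathbf{G}}_{\mathcal{N}}(r)$ degenerates to the first-order, weight-proportional random walk that defines DeepWalk. Composing this with the equivalence from (a) and the shared \textbf{Optimization} stage yields stochastic equivalence of multi-node2vec on $\mathbf{G}_{\mathcal{N}}^m$ with DeepWalk on $\widetilde{\mathbf{G}}_{\mathcal{N}}(r)$.
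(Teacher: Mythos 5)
Your proposal follows essentially the same route as the paper's proof: since the \textbf{Optimization} stage is a shared stochastic map given the bag of nodes, equivalence reduces to matching the unnormalized transition weights of the two walks, which you obtain by summing $\pi_{v,x,\ell,\ell'}$ over layer indices, splitting intra- and inter-layer contributions, and noting that $\gamma_r(v,x)=r^{-1}\mathbb{I}(x=v)$ while $\beta_{pq}(t,v)=1$ for the stay-at-node move; part (b) then follows from $\beta_{pq}\equiv 1$ when $p=q=1$, exactly as in the paper. Where you go beyond the paper is in isolating the hidden-layer projection issue as the ``main lemma,'' and your instinct there is sound: the paper's one-line appeal to the law of total probability silently pulls $\beta_{pq}(t,x)$ out of the layer sum even though it depends on the current layer through $d_\ell(t,x)$, and silently uses the fact that $\gamma_r$ annihilates inter-layer moves to $x\neq v$, so that the $r^{-1}$-weighted off-diagonal inter-layer mass appearing in $\widetilde{\mathbb{A}}(r)$ must be zero for the bookkeeping to close --- which holds for the registered, diagonal couplings of the fMRI application but not for arbitrary inter-layer weights $w_{\ell,\ell'}(u,v)$, $u\neq v$. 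Be aware, then, that the consistency statement you defer is not routine: without diagonal inter-layer structure, and without either $p=q=1$ or layer-independent within-layer distances on the support of the walk (so that the same $\beta_{pq}(t,x)$ multiplies every layer's weight), the projected node process need not be Markov with the node2vec kernel on $\widetilde{\mathbf{G}}_{\mathcal{N}}(r)$; a complete write-up should state these hypotheses explicitly, which is more than the paper's own proof does.
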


\begin{proof} Since multi-node2vec, node2vec, and DeepWalk all use Skip-Gram on identified neighborhoods, it will suffice to show that the transition probabilities of the random walks used to identify the neighborhoods for each method are equal under the stated conditions to prove Theorem \ref{thm:equiv}. We begin by proving part (a) for general $p,q,r > 0$. Let $\pi_{u,v}$ denote the unnormalized transition probability of the random walk traveling from $u \rightarrow v$ based on the application of node2vec on the graph $\widetilde{\mathbf{G}}_N(r)$. Similarly let $\pi^*_{u,v}$ denote this unnormalized transition probability of the random walk based on the application of multi-node2vec to $\mathbf{G}_{\mathcal{N}}^m$. Then by the law of total probability we have
	\begin{align*}
	 		\pi^*_{u,v}: &= Z \cdot P_{\mathcal{G}_{\mathcal{N}}^m}(u_{j+1} = u \mid u_{j} = v)
	 		= Z \cdot \sum_\ell \sum_{\ell'} w_{\ell, \ell'}(v, x) P(\ell_{i-1} = \ell)\\
			& = \beta_{pq}(t,v) \sum_{\ell} w_{\ell, \ell}(u, v) + \gamma_r(u, x) \sum_{\ell \neq \ell'} w_{\ell, \ell'}(u,v). 
		\end{align*}

Note that $\beta_{pq}(t,v) = 1$ when $v = u$ and that $\gamma_r(u,x) = r^{-1}$. It follows that $\pi^*_{u,v} = \pi_{u,v}$ and thus part (a) is proved. Part (b) is proven in an analogous fashion by taking $\pi_{u,v}$ as the transition probability for the random walk associated with DeepWalk on the graph $\widetilde{\mathbf{G}}_N(r)$ and noting that $\beta_{pq}(t,v) \equiv 1$ when $p = q = 1$.
\end{proof}

Lemma \ref{thm:equiv} reveals that the application of multi-node2vec on an observed multilayer network $\mathbf{G}_{\mathcal{N}}^m$ is stochastically equivalent to the application of node2vec on the adjusted aggregate graph $\widetilde{\mathbf{G}}_{\mathcal{N}}{(r)}$. In practice, this means that running multi-node2vec on an observed multilayer network will provide the same results as running node2vec on the corresponding adjusted aggregate network if the same seed set is specified for a random number generator. This suggests that multi-node2vec can be implemented with just the storage of $\widetilde{A}(r)$, which contains $O(N^2)$ edge weights. In the special case that $p = q = 1$, one can equivalently run multi-node2vec, node2vec, or DeepWalk.\\ 


\section{Numerical Study}

We now apply multi-node2vec to a multilayer brain network representing the functional connectivity of 74 healthy individuals and 60 patients with schizophrenia who underwent resting state fMRI. In this case study, we demonstrate the use of multi-node2vec for three primary objectives: (i) clustering of brain regions into communities of similar features, (ii) classification of nodes into anatomical regions of interest in the brain, and (iii) comparing and classifying two populations of individuals. To assess overall performance, we compared multi-node2vec with several off-the-shelf embedding techniques, including LINE, DeepWalk, and node2vec. Our analysis reveals that multi-node2vec identifies features that closely associate with the functional organization of the brain and provides a powerful strategy for comparing across groups of individuals. 

To analyze the efficacy of multi-node2vec, we consider the tasks of clustering and classification of ROIs using the subnetwork labels as ground truth. We furthermore analyze multi-node2vec via a classification study, where we aim to classify healthy individuals from schizophrenia patients using global summaries of the identified multilayer embeddings. Publicly available code for the multi-node2vec algorithm as well as all code used for our findings are available at \url{https://github.com/jdwilson4/multi-node2vec}.

\subsection{Description of Data}
\vskip .01pc
We investigate a data set of resting-state fMRI scans of 74 healthy individuals (ages 18-65, 23 female) and 60 individuals with Schizophrenia (ages 18-65, 16 female) from the Center for Biomedical Research Excellence (COBRE \cite{mayer2013functional}) posted to the 1000 Functional Connectomes Project \cite{biswal2010toward}. Participants had no history of neurological disorder, mental retardation, substance abuse or dependencies in the last 12 months, or severe head trauma. Participants underwent 5 minutes of resting state fMRI in which they had no task except to stay awake, followed by a multi-echo MPRAGE scan (see \cite{mayer2013functional} for scanning parameters and preprocessing information). 

To construct the multilayer representation of this data set, we use a previously validated atlas \cite{power2011functional} that specifies 264 spheres of radius 8mm, which constitute our 264 regions of interest (ROIs). We averaged the fMRI time-series' from all voxels within each ROI, yielding 264 time-series' per participant. For each of these time series', we regressed out 6 motion parameters (to account for head movement), 4 parameters corresponding to cerebrospinal fluid, and 4 parameters corresponding to white matter. These steps have been shown to reduce bias and noise within the data \cite{chai2012anticorrelations}. Finally, for each participant, we correlated the 264 time-series' with one another, yielding a 264 * 264 correlation matrix for each participant. We analyze the weighted multilayer network representation of these data. Intra-layer edges are encoded with a weight of $w_{\ell, \ell}(u,v)\mathbb{I}(r(u,v) > 0)$, where $r(u,v)$ is the correlation between the two incident regions. Inter-layer edges are encoded as $w_{\ell, \ell'}(u,v) = 1$ when $u = v$ and 0 otherwise. The ground-truth subnetwork labels are previously defined functional subnetworks, established in \cite{power2011functional}. The subnetwork labels and number of regions belonging to each functional subnetwork is presented in Table \ref{tab:summary}.

\vskip 1pc
\begin{table}[ht]
\caption{The number of regions in each functional subnetwork of the whole brain when the Power atlas parcellation is applied to each whole brain network. These subnetworks are used to demonstrate the use of multi-node2vec in the classification study in Section 4.3. \label{tab:summary}}
\centering
\begin{tabular}{l | l | l}
~Auditory: 13 ~& ~Dorsal Attention: 11 ~& ~Default Mode: 58 \\ 
~Salience: 15 ~&~Memory/retrieval: 5 ~& ~Fronto-parietal Task Control: 25 \\
~Visual: 31~ & ~Ventral Attention: 9~ &~Subcortical: 13~ \\ 
~Cerebellar: 4~ & ~Uncertain: 28~ & ~Cingulo-opercular Task Control: 14 \\
~Sensory -- Hand: 30~ & ~Sensory -- Mouth: 5 & \\
\end{tabular}
\end{table}

\subsection{Simulation Study and Parameter Choices}
Before applying multi-node2vec to the fMRI data, we first describe a strategy for choosing the parameters for the algorithm through the use of simulation and theoretical study. In \cite{grover2016node2vec}, the authors recommended values for the walk parameters $p$ and $q$ based on extensive empirical studies for node2vec. For consistency and ease of comparison, we use their recommended values ($p = 1, q = 0.5$) for both node2vec and multi-node2vec in our application though we mention that these too could be tuned using simulation. In Section 5, we study the limiting behavior of multi-node2vec as a function of the walk length parameter, $l$, and find that asymptotically in $l$ the embeddings from multi-node2vec converge to the result of non-negative matrix factorization. To balance computational speed and theoretical gaurantees, we therefore suggest using a moderately sized $l$ in application and opt for $l = 30$. For the layer walk parameter $r$, we assess the performance of three different values, $r = 0.25, .5, .75$, on the fMRI dataset to investigate differences among these values.

Through simulation, we are particularly interested in three aspects of multi-node2vec: (i) analyzing the specificity of multi-node2vec, (ii) investigating the effects of the neighborhood or context size of the identified neighborhoods, $k$, and the dimension of the feature vectors $D$ as they relate to the size, structure, and connectivity of the network, and (iii) analyzing the scalability of multi-node2vec for networks with a large number of nodes and/or layers.  

For (i), we simulated a multilayer graph where each layer was an independent \erdos random graph with probability of connection set to the average degree of that group. These simulated graphs represent what a multilayer network would look like at random with no topological structure other than preserving the average degree of the group of images. 

For (ii) and (iii), we use unweighted multilayer networks using a multilayer generalization of the planted partition model, designed to align with the connectivity, clustering, and size of the observed multilayer networks in our fMRI study. In all simulations, each layer of the simulated multilayer network contains $n = 264$ nodes to match the fMRI networks in our application. Nodes were placed randomly into $c$ equally-sized communities. For each layer, edges are placed randomly between two nodes of the same community with probability $p_{in}$ and edges are placed between two nodes of differing communities with probability $p_{out}$. With this construction, each layer of the generated network has the same community structure across layers. This graph model is a special case of the multilayer stochastic block model (MSBM) considered \cite{han2015consistent, stanley2015clustering, wilson2016community}. For our analysis, nodes of the same community are expected to have similar features with one another and different features than nodes from other communities. This model therefore provides a well-structured multilayer network for which we can study and tune multi-node2vec. We analyze the effect of $k$ on the performance of the algorithm on the MSBM. We note that one could also tune the dimension parameter $D$ through an analogous simulation study. In our application, we have access to ground-truth labels for the nodes -- the functional subnetwork label -- and therefore directly compare the performance of multi-node2vec against competing methods by running each method across a grid of dimension $D$ ranging from 2 to 100.

For each of the following studies, we set $p_{in} = 0.49$ to match the average degree of the functional brain networks. To assess the relevance of the features identified by multi-node2vec, we compare the clusters obtained from the k-means algorithm on the feature matrix with the true community labels of the network and calculate the adjusted rand score as a measurement of match between the two partitions. For each simulation, we replicate the study 30 times and report the average adjusted rand score. The results for each simulation is presented in Figure \ref{fig:sim3} and discussed below.


\begin{figure}[ht]
\centering
\includegraphics[trim = 0cm 0cm 0cm 0cm, width = 0.75\linewidth]{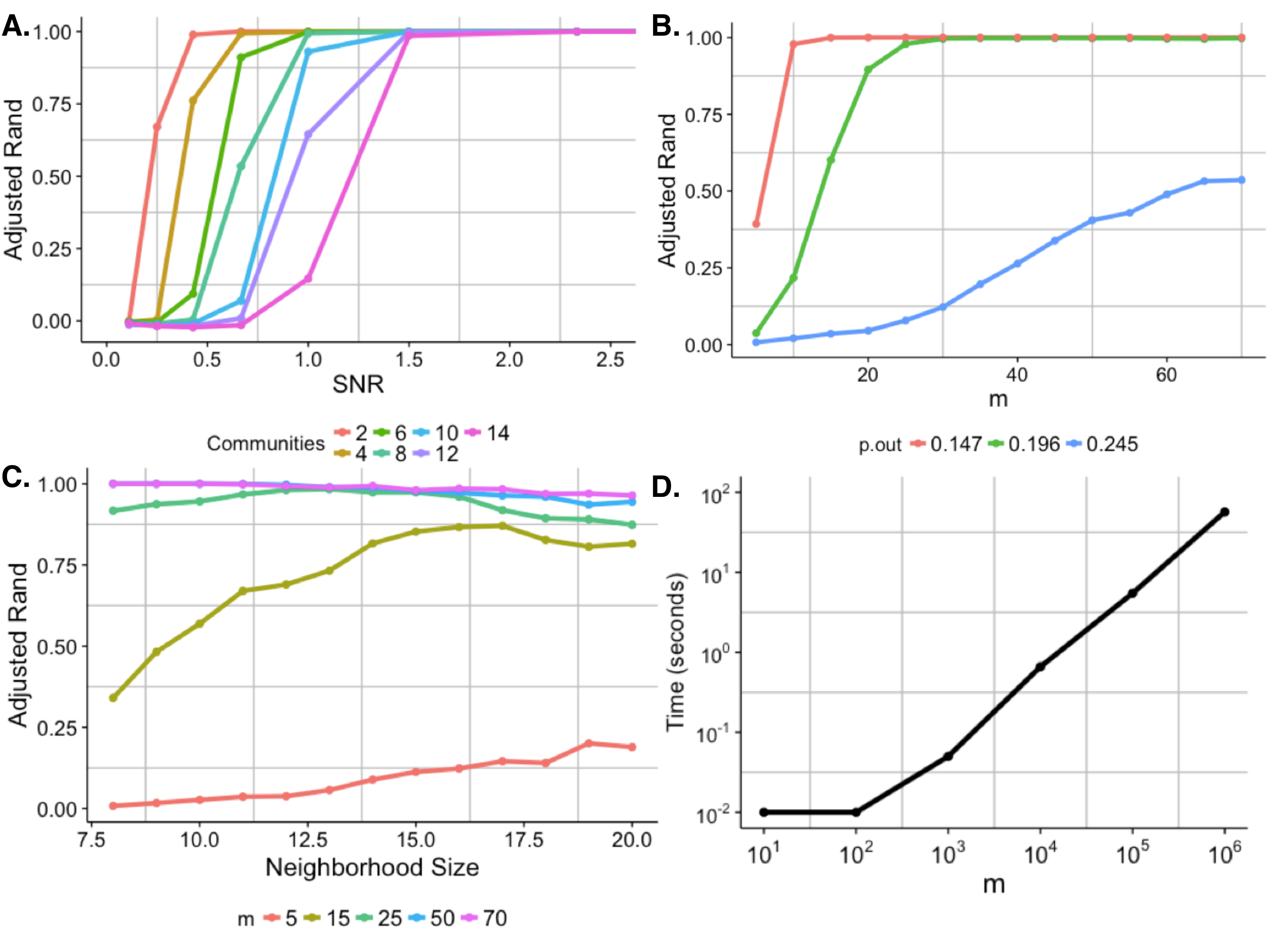}
\caption{Simulation results from the numerical study described in Section 6. All simulations are repeated 30 times and the average is shown. {\bf A.} The adjusted rand index score of the clusters identified by k-means clustering on the identified feature matrix from multi-node2vec applied to the multilayer stochastic block model as a function of the signal to noise ratio: SNR = $p_{in} / p_{out} - 1$ and {\bf B.} across the number of layers in the network. {\bf C.} The adjusted rand index score of the clusters identified by k-means clustering on the identified feature matrix from multi-node2vec as a function of the neighborhood size input to the algorithm. {\bf D.} The average time (in seconds) required by multi-node2vec on multilayer random graphs with 10 nodes in each layer and $m$ layers. Notably, networks with 1 million layers required just 58 seconds. \label{fig:sim3}}
\end{figure}

\subsubsection{Specificity of multi-node2vec}
To test the specificity of the results identified in our study, we first applied multi-node2vec to multilayer \erdos random graphs of the same size and expected degree as the populations that we investigated. We expect that the embeddings of completely random multilayer graphs would give no structural insights, and thus that the clusters identified from the embeddings would not closely align with the functional subnetworks. This test provides a validation that multi-node2vec can effectively distinguish real signal from noisy networks. To test this, we first identified the embeddings on a simulated multilayer network using multi-node2vec. Then, we identified 13 clusters from the embeddings and calculated the adjusted rand index (as done in our application study) of the clusters with the true subnetwork labels. We repeated this across 100 simulated multilayer networks from both the healthy and patient groups. 

For the simulated networks for the healthy group, we calculated an average adjusted rand of 0.25 (st. deviation 0.08). For the simulated networks representing the patient group, we calculated an average adjusted rand of 0.21 (st. dev 0.10). These results reveal that there is no structure in the embeddings of these random multilayer graph models for each group, and suggest that the multi-node2vec algorithm does not incorrectly identify structure in a noisy network.

\subsubsection{Sensitivity of multi-node2vec}
\noindent \underline{Community Strength}\\
We first investigate the effects of the strength of community structure on multi-node2vec. To do so, we varied the out-group probabilities $p_{out}$ to be between 10\% - 90\% of $p_{in}$ and assess the performance of the algorithm over values of the signal to noise ratio (SNR) = $p_{in}/p_{out} - 1$. We simulated multilayer networks like this with 74 layers, across $c = 2$ to $14$ communities per layer. Results are shown in plot {\bf A.} of Figure \ref{fig:sim3}. We observe that as the disparity between in-group and out-group probabilities increased, the feature embeddings more clearly represented the community structure in the graph. Furthermore, across all values of $p_{out}$ the performance of multi-node2vec improved as the number of communities decreased. For multilayer networks with 2 communities, the feature embeddings perfectly represented the community structure for values of SNR greater than or equal to 0.4. Networks with 14 communities per layer required SNR greater than 2.0 to achieve the same result. These results provide evidence that the feature embeddings identified by multi-node2vec are able to efficiently capture the community structure of multilayer networks. 


\noindent \underline{Effect of the Number of Layers}\\
We next analyze the effect of the number of layers on the multi-node2vec algorithm. In this simulation, we generated multilayer graphs from the planted partition model with $m$ = {5, 10, 15, … 65, 74}. As before, we fixed $p_{in} = 0.49$ and varied $p_{out} = 0.245, 0.196,$ and $0.147$ to match the best three values from the community strength simulations. We report the average adjusted rand from 30 replications on networks with $c = 12$ communities in plot {\bf B.} of Figure \ref{fig:sim3}. For all three values of $p_{out}$, the performance of multi-node2vec consistently improves across an increasing number layers. This result supports the belief that each layer provides additional neighborhood information for each node from which the multi-node2vec algorithm can efficiently learn. 



\noindent \underline{Effects of Context Size, $k$}\\
To test the effect of neighborhood size, $k$, we ran simulations of the planted partition model multilayer networks with $m$ = {5, 15, 25, 50, and 74}  over a range of 8 - 20 nodes per neighborhood with $p_{out} = 0.245$. We plot the average adjusted rand of the clusters identified on the feature matrix for networks with $c = 12$ communities in the plot {\bf C.} of Figure \ref{fig:sim3}. We find that the algorithm improves with an increasing context size; however, the number of layers in the network has more impact on the performance of the algorithm. Indeed, when $m \geq 25$, the neighborhood size does not significantly affect (if at all) the performance of the algorithm. On the other hand, for a small number of layers (say, $m = 5$) the increasing the context size plays a more important role in its identified features. Thus, for multilayer networks with a large enough of layers, the context size will not dramatically affect the results of multi-node2vec, but in networks with fewer than 25 layers, one should carefully tune this parameter.


\noindent \underline{Scalability}\\
Identifying a neighborhood for the bag of nodes needed for the algorithm relies upon a random walk strategy, which can be done in constant time using alias sampling (as done in the node2vec algorithm). The optimization part of the algorithm turns out to be linear in the number of distinct nodes in the multilayer network. Notably, this is drastically faster than the spectral decomposition of the network, which in the best case scenario is of cubic in the unique number of nodes. To show this empirically, we consider multilayer networks with $n = 10$ unique nodes in each of $m$ total layers. We apply multi-node2vec on planted partition networks across a range the number of layers $m$ from 10 to 1 million layers. We calculate the amount of time (in seconds) required for multi-node2vec with fixed $k = D = 5$ on 30 replications and report the average time in the plot {\bf D.} of Figure \ref{fig:sim3}. For networks with 1 million layers, multi-node2vec took on average only 58 seconds. We note that the complexity of multi-node2vec as a function of $n$ is also linear, and this is justified with the scalability analysis in \cite{grover2016node2vec}. This figure suggests that the multi-node2vec algorithm is linear in the number of layers in the network, and provides evidence that this algorithm is well-suited for embedding massive multilayer networks. 

\subsection{Analysis of Schizophrenia Data}

Based on our discussion and results in Sections 4.1 and 4.2, we set $k = 10$, and $l = 30$. We set $p = 1$, and $q = 0.5$ to match the parameter settings of node2vec as suggested in \cite{grover2016node2vec}, and we investigated the effects of the layer walk parameter $r = 0.25$, 0.50, and 0.75. 

\subsubsection{Clustering Regions of Interest}

To explore functional region segmentation, we first clustered the rows of the feature matrices identified from multi-node2vec across all three walk parameter settings. For this task, we were particularly interested in the effect of the feature dimension on clustering performance. To test this effect, we proceeded as follows. Multi-node2vec was run using $D$ features. The k-means clustering algorithm was then applied on the rows of the resulting $N \times D$ matrix, and the number of clusters was set to 13 to match the true number of subnetwork labels. For each run, the identified clusters were compared against the true subnetwork labels using the adjusted rand score. We repeated this process for each method across a grid of $D$ from 2 to 100 in increments of 2.

The match of the identified clusters with the ground truth improves as the number of features, $D$ increases. Notably, even for $D$ as small as 6, the ROI clusters closely resemble the ground truth labels (adjusted rand $\approx$ 0.83). We note that such clustering analyses provide a heuristic for assessing how many dimensions should be used to capture a desired ground truth in a multilayer network. For example, in this case we can use even just 2 dimensions and still capture more than 80$\%$ of the functional organization of the healthy individuals. These results reveal that the features of multi-node2vec provide practically relevant information about the functional subnetwork to which these ROIs belong. This finding is further supported in the classification study performed next. 


\subsubsection{Classification of Functional Subnetworks}

We now assess the utility of the features learned from multi-node2vec through the classification task of predicting the functional subnetwork location for each ROI in the healthy individuals. We considered the classification of the nine subnetworks containing ten or more ROIs, which included the \emph{auditory, cingulo-opercular task control, default mode, fronto-parietal task control, salience, sensory/somatomotor -- hand, subcortical, visual}, and \emph{dorsal attention} subnetworks. In the classification task, we tested two scenarios for network embedding methods -- (i) the multilayer network representing the resting state fMRI of 74 healthy individuals alone, and (ii) the multilayer network with additional noisy layers.

For each subnetwork, we trained a one-versus-all logistic regression classifier on the rows of the feature matrix for each method on 80$\%$ of the regions using $D$ identified features. We applied the classifier to the remaining 20$\%$ of the ROIs and assessed the performance of the classifier using the area under the curve (AUC). We performed this classification on the feature matrices for each method and calculated the resulting AUC of the classifier across $D$ ranging from 2 to 100 in increments of 2. 

We compared multi-node2vec to several off-the-shelf embedding methods including node2vec, DeepWalk, and LINE. As these methods are single-layer methods, we ran them on the average weighted network of each population where layers were the same as those used for multi-node2vec. For node2vec, we set the return parameter as p = 1 and the in-out parameter as q = 0.5 to guide the neighborhood search following the suggestions of the original paper. For DeepWalk, we kept default parameters. Matching multi-node2vec, we set k = 10 for both node2vec and DeepWalk. For LINE, we used its default parameters: negative-sampling = 5 and $\rho$ = 0.025. To match LINE's default of 1 million training samples, we sampled s = 3,788 neighborhoods for each node in node2vec and DeepWalk. We ran all methods to learn $D$ features, from $D = 2, \ldots, 100.$ All experiments were performed on an AWS T2.Xlarge instance (specs: a 64-bit Linux platform with 16 GiB memory). We report the AUC for each method and each subnetwork when 20 layers of noise were added in Figure  \ref{fig:classification20}. Results for the non-noisy setting and the setting with 10 layers of noise are shown in the Appendix.

Our study reveals that even in the presence of noise, multilayer embeddings of the healthy individuals closely match the functional organization of the brain. Furthermore, multi-node2vec is comparable to the competing methods in the non-noisy setting, where we expect layers to be homogeneous across the healthy patients. We further find that multi-node2vec is robust to multilayer networks with additional noisy layers. Indeed in this setting, we find that multi-node2vec outperforms its competitors in seven of nine classification studies. These results provide evidence of the robustness of multi-node2vec across multilayer networks with heterogeneous layers and reveal the overall utility of the algorithm for noisy and non-noisy networks.


We begin by analyzing the classification result on the original 74 individuals (figure shown in the Appendix.). Since each individual in the original study is healthy, we expect the networks of each these individuals to share similar structure. It follows that the aggregate network provides an unbiased summary of the multilayer network with less variability than each layer alone. Thus methods applied to the aggregate network are expected to do better than multi-node2vec. Despite this, we find that multi-node2vec is comparable to the competing methods for seven out of nine subnetworks and outperforms other methods for small $D$ in the \emph{visual} and \emph{sensory-motor (hand)} regions. The LINE method does particularly well in the \emph{salience} and \emph{dorsal attention} classifications, and outperforms multi-node2vec and all other methods across $D$. All methods improve with increasing $D$ and approach 1, indicating perfect classification.

To test the performance of multi-node2vec on multilayer networks with noise, we next generated $b$ layers, each with 264 nodes to match the number of regions in every other layer, from an \erdos with edge probability set to the average edge density across all 74 layers. In this way, we add $b$ layers of randomly connected nodes that act as noise against the structure present in the 74 individuals in the study. We set $b = 10$ and 20 and re-ran all of the methods with the same parameter settings as in the original study. 

\begin{figure}[ht]
\centering
\includegraphics[width = 0.8\textwidth, angle = 270]{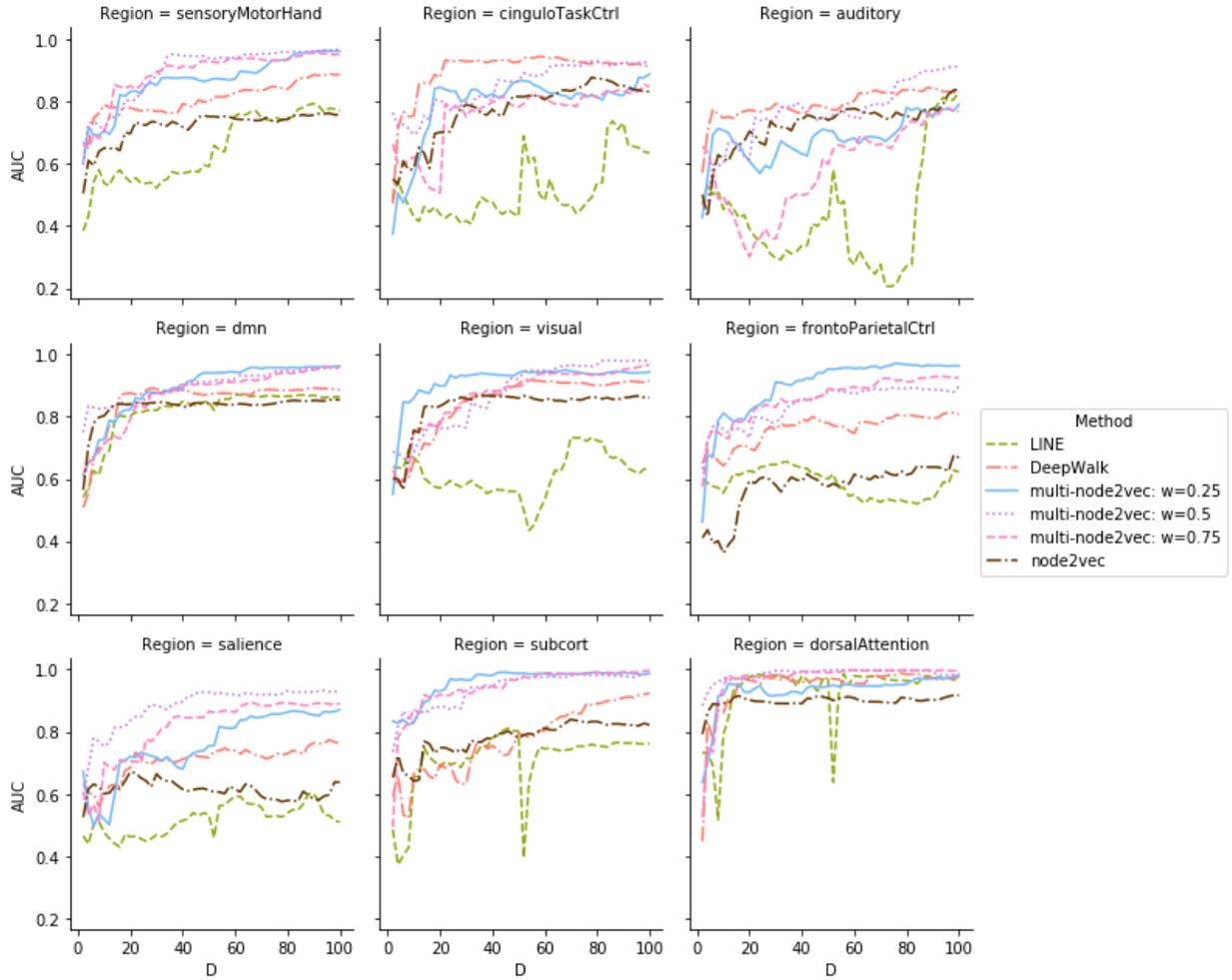}
\caption{The AUC of a one vs. all logistic regression classifier for the nine major functional subnetworks of the brain all 74 healthy individuals and 20 layers of noise. Plots show the AUC of the classifier against the number of dimensions $D$ for feature representations from multi-node2vec, node2vec, DeepWalk, and LINE. \label{fig:classification20}}
\end{figure}

As can be seen in Figure \ref{fig:classification20}, single-layer embedding methods are dramatically affected by the addition of noisy layers; whereas, multi-node2vec is robust to noise. For both $b = 10$ (in the Appendix Figure 4) and $b = 20$ (Appendix Figure 5), all three runs of multi-node2vec outperforms competing methods for seven out of nine of the classification studies. In particular, multi-node2vec has clear advantages over the competing methods in the \emph{subcortical}, \emph{salience}, \emph{sensory-motor (hand)}, and \emph{fronto-parietal task control} regions. Importantly, multi-node2vec's performance is not strongly affected by the addition of more noisy layers suggesting that the features identified by the method align with the true 74 layers of the population. We find that the LINE method is most affected by noise, followed by node2vec. 
%

These results, in combination to the clustering results from the previous section, provide strong evidence that the features engineered from multi-node2vec provide biologically relevant information about the functional organization of the brain, and is generally robust to moderate amounts of noisy layers.

\subsubsection{Comparison of Healthy controls and Patients}
To compare the populations of patients with schizophrenia to their healthy peers, we apply multi-node2vec with $r = 0.25$ to both groups using the same parameters as described above providing a 264 $\times$ 100 network embedding for each group. Importantly, the two embeddings are not directly comparable as features for each population may differ or be arranged in differing order. To assess differences between groups then, one must compare within population summaries across populations. For our study, we compare the variability of the embeddings within each functional subnetwork. To make this precise, let $A$ represent the index of the regions that are contained within a specified functional subnetwork $\mathcal{A}$. Let $f_{g,i}$ denote the $i$th feature vector in group $g$ and $\overline{f}_{g,\mathcal{A}}$ denote the mean vector of the embeddings from region $\mathcal{A}$ in group $g$, where $g = $healthy, patient. Let $\mid \mid x \mid\mid_F$ notate the Frobenius norm of the vector $x$. For each group, we calculate the mean squared deviation for every region $\mathcal{A}$:

$$MSD_{g, \mathcal{A}} = \dfrac{1}{|\mathcal{A}|}\sum_{i \in A}\mid\mid f_{g,i} - \overline{f}_{g,\mathcal{A}}\mid\mid_F^2,$$

\noindent where $|\cdot|$ represents the cardinality of a set. The value of $MSD_{g, \mathcal{A}}$ quantifies the inner regional variability of the embeddings for region $\mathcal{A}$ in the $g$th sample. Large values of $MSD_{g, \mathcal{A}}$ suggest low similarity of nodes within the same region $\mathcal{A}$ and hence higher entropy among that region. For each region $\mathcal{A}$ mentioned in Table \ref{tab:summary}, we compare the mean squared deviation across populations using a two sided t-test on the quantity 

$$MSD_{healthy, \mathcal{A}} - MSD_{patient, \mathcal{A}}.$$

These results are reported in Table \ref{tab:mean_diff}. We find a significant difference in the mean squared deviation in the Default Mode Network (p-value $<$ 0.001) as well as a strong trend within the salience network (p-value = 0.085). In both subregions, the mean squared deviation was found to be lower in the healthy group than in the patient population, suggesting higher variability in the patient group in these two regions. Our findings are well-supported by the Triple Network Model (TNM) theory of the brain \cite{menon2010saliency, seeley2007dissociable}. The TNM explains how individuals switch between externally motivated cognitive processes (i.e., goal directed tasks) that are associated with the central executive networks and internally motivated cognitive processes (i.e., rumination and mind-wandering) that are associated with DMN via the salience network \cite{menon2010saliency, seeley2007dissociable}.

The TNM foremost relates to schizophrenia because of differences observed in the Default Mode Network (DMN) in individuals with schizophrenia. Previous research has indicated increased activity and within network connectivity in the DMN \cite{whitfield2009hyperactivity} and decreased segregation between the DMN and central executive networks \cite{woodward2011functional} in patients with schizophrenia versus healthy individuals. The TNM further posits that pathological salience (inappropriate monitoring by the salience network) may be associated with DMN pathology and consequently many of the symptoms of schizophrenia \cite{menon2011large}. This theory is consistent with recent evidence indicating TNM, and particularly salience network, dysregulation is correlated with symptom severity in patients with schizophrenia \cite{hare2018salience, supekar2019dysregulated}. Our findings that the DMN has significantly smaller variability within healthy individuals than in individuals with schizophrenia as well as the fact that the salience network is statistically different between individuals with schizophrenia and healthy controls empirically supports these findings.

Finally, our results are consistent with a recent meta-analysis investigating the effect of schizophrenia on connectivity \cite{li2019dysconnectivity}, which found consistent hypoconnectivity amongst the DMN in patients with schizophrenia. Notably, however, this meta-analysis also found aberrant connections in several other functional networks, a finding we do not replicate here. Future research is clearly needed to know whether our lack of significant findings in other networks (e.g., auditory, somatomotor) reflects lower power in our study compared to the meta-analysis, or a systematic difference as a result of the vastly different methodological approaches. Given that our results with the Default Mode Network and the Salience Network are consistent with both the meta-analysis as well as other papers using this same dataset (e.g., \cite{wang2014disruptive}), we suspect this is primarily an issue of power, but future work is clearly necessary to fully understand these discrepancies.

\begin{table}[ht]
\caption{Two sided ninety-five percent confidence intervals for the difference of mean squared deviation in healthy controls and schizophrenia patients. Deviations were calculated using 100 features from the network embedding for each group. $^{**}$the sum of squares deviation in healthy controls was less than that in the patients with schizophrenia at a 0.001 level; $^*$the sum of squares deviation in healthy controls was less than that in the Schizophrenia patients at a 0.10 level. \label{tab:mean_diff}}
\vskip .5pc
\centering
\resizebox{\linewidth}{!}{ 
\begin{tabular}{l | c | l | c }
{\bf Subnetwork} & ~{$MSD_{healthy} - MSD_{patient}$} & ~{\bf Subnetwork} & ~{$MSD_{healthy} - MSD_{patient}$}\\
\hline
Auditory & (-0.108, 0.332) & ~Dorsal Attention & (-0.326, 0.106)\\
C-O Task Control & (-0.319, 0.155) & ~Default Mode & ~~(-0.241, -0.031)$^{**}$\\
Salience & ~(-0.373, 0.047)$^*$ & ~Memory/retrieval & (-0.359, 0.581) \\
F-P Task Control & (-0.213, 0.187) & ~Visual & (-0.160, 0.095) \\
Sensory -- Hand & (-0.295, 0.089) & ~Ventral Attention & (-0.204, 0.434) \\
Subcortical & (-0.198, 0.488) & ~Cerebellar & (-0.578, 0.379) \\
Sensory -- Mouth & (-0.440, 0.269) & & \\
\end{tabular}}
\end{table}

\subsubsection{Classification of Patients and Healthy Controls}

We next consider the classification task of differentiating schizophrenia patients from healthy controls using the embeddings from multi-node2vec. We first apply multi-node2vec to each of the 134 total individuals in the study separately (74 healthy and 60 patients) and extract $D = 100$ feature embeddings describing each person's functional connectivity. From these embeddings, we then calculate the mean squared deviance $MSD_{j, \mathcal{A}}$ for each individual $j = 1, \ldots, 134$ and each region $\mathcal{A}$. Using the binary response vector $y = (y_1, \ldots, y_{134})$ where $y_j = 1$ if individual $j$ has schizophrenia and 0 if individual $j$ is a healthy control, we apply several off-the-shelf binary classification techniques -- including k nearest neighbors, logistic regression, an L2 penalized logistic regression and a random forest classifier -- using the mean squared deviance vectors to predict whether or not the individual has schizophrenia. We perform ten fold cross validation and report the average and standard error of the results in Table \ref{table:classification_results}. For k nearest neighbors, we look across a grid of $k$ between 1 and 30 and report the result with the highest mean accuracy.
\begin{table}[ht]
	\caption{Ten fold cross validation results for classification of patitents and healthy controls using individual embeddings. The average and standard error (s.e.) are reported.\label{table:classification_results}}
	\vskip .5pc
	\centering
	\begin{tabular}{l | l}
		\underline{\bf Method} & \underline{\bf ~Mean Accuracy (s.e.)}\\
		k Nearest Neighbors & ~0.718 (0.017)\\
		Logistic Regression & ~0.758 (0.075)\\
		L2 Penalized Logistic ~ & ~0.592 (0.021)\\
		Random Forest & ~0.787 (0.077)\\
	\end{tabular}
\end{table}

Th random forest classifier performs better than the other off-the-shelf methods using our discovered embeddings, and obtains a classification accuracy of 0.787 on average. It is important to reiterate that multi-node2vec is an \emph{unsupervised method}, namely the algorithm is not trained to explicitly distinguish between two populations as is done formally in the network classification problem. With that in mind, there have studies on the COBRE data set that were \emph{supervised} and though these studies are not directly comparable with our result, their comparison does deserve some discussion. 

We compare our findings with the recent work in \cite{relion2019network}, which establishes the highest performance to date on the COBRE data set using supervised edge-based techniques (see Table 1 for their results). Their method achieved an accuracy of 0.927. Furthermore, other edge-based methods that employ variable selection on the edges in each network obtain accuracies on average approximately 0.85. As expected, such supervised methods do indeed outperform our unsupervised strategy. Perhaps the most fair comparison among the results in \cite{relion2019network} to our own result is the comparison of multi-node2vec with network summaries. Like multi-node2vec, network summaries provide a dimension reduction to the original networks and are not explicitly designed for classification. We find that classification via embeddings of multi-node2vec significantly outperform the classification using network summaries, which obtained 0.614 accuracy on average. This study reinforces the fact that multi-node2vec provides biologically relevant information for classification of disease type. In future work, we will investigate developing supervised embedding methods designed specifically to classify disease and other clinical features.
%



\section{Limiting Behavior of multi-node2vec}
Multi-node2vec is an approximate algorithm that seeks to maximize the log-likelihood objective function given in equation (\ref{eq:log_likelihood}). Approximation is needed for two objectives - (i) the identification of multilayer neighborhoods via random walks, and (ii) the application of the Skip-gram neural network model with negative sampling. By analyzing the asymptotic nature of the random walks in the {\bf NeighborhoodSearch} procedure as $l \rightarrow \infty$, one can leverage the recent work on the Skip-gram with negative sampling from \cite{levy2014neural, qiu2018network} to show that multi-node2vec approximates implicit matrix factorization. We describe this main result below.

Denote $\mathcal{D}$ as the collection of neighborhoods identified by the {\bf NeighborhoodSearch} procedure. Let $w = \{u_1, \ldots, u_l\} \in \mathcal{D}$ be a collection of nodes resulting from a length $l$ random walk in the {\bf NeighborhoodSearch} procedure. Define the $k$-length contexts for node $u_i$ as the nodes neighborhing it in a $k$-sized window $u_{i-k}$, $\ldots$, $u_{i-1}$, $u_{i+1}$, $\ldots$, $u_{i+k}$ and let $c$ denote the collection of contexts for $w$. Let $\#(w,c)$ denote the number of times the node-context pair $(w,c)$ appears in $\mathcal{D}$. Further, let $\#(w)$ and $\#(c)$ denote the number of times the node $w$ and the context $c$ appear in $\mathcal{D}$, respectively. As shown in \cite{levy2014neural}, running Skip-gram with negative sampling is equivalent to implicitly factorizing

\begin{equation}\label{eq:factorizing}
	\log\left(\dfrac{\#(w,c) |\mathcal{D}|}{\#(w) \#(c)} \right) - \log(b), \end{equation}

\noindent where $b$ is the number of negative samples specified. Expression (\ref{eq:factorizing}) suggests that by getting a hold of the quantity in the first logarithm of the expression, we can relate multi-node2vec directly to matrix factorization. 

Our results provide asymptotic expressions for ${\#(w,c) |\mathcal{D}|}/{\#(w) \#(c)}$ when the random walk length $l \rightarrow \infty$. To make our result explicit, we need to first introduce a little more notation.  Define $\widetilde{\mathbf{d}}_u = \sum_{v \in \mathcal{N}} \widetilde{A}_{u,v}(r)$ as the generalized degree of node $u$ in $\widetilde{\mathbf{G}}_{\mathcal{N}}{(r)}$ and let $\widetilde{\mathbf{D}} = \text{diag}(\widetilde{\mathbf{d}}_1, \ldots, \widetilde{\mathbf{d}}_N)$. Define the volume of $\mathbf{G}_\mathcal{N}(r)$ as ${\text{vol}}(\widetilde{\mathbf{G}}_\mathcal{N}(r)) = \sum_{u \in \mathcal{N}} \widetilde{\mathbf{d}}_u$. Define $\underline{\mathbf{P}}$ as the array containing the second order transition probabilities of {\bf NeighborhoodSearch}: $\underline{\mathbf{P}} = \{\underline{{P}}_{u,v,w} = P(u_{j+1} = u \mid u_j = v, u_{j-1} = w)\}$ and let ${\mathbf{X}}$ be its corresponding stationary distribution satisfying $\sum_w \underline{P}_{u,v,w} X_{v,w} = X_{u,v}$. Furthermore, let $\underline{P}^k_{u,v,w} = P(u_{j+r} = u \mid u_j = v, u_{j-1} = w)\}$ denote the $k$th step transition probability.


Finally, suppose $\stackrel{P}{\rightarrow}$ denotes convergence in probability. Our analysis of multi-node2vec depend on the bias of the transition probabilities for the random walks of the {\bf NeighborhoodSearch} procedure in equation (\ref{eq:transition_probs}), $\alpha_{pqr}(t,x,\ell, \ell')$. We can now state our next theorem, which relates multi-node2vec directly with matrix factorization. 

\begin{theorem}\label{thm:deepwalk}
	Let $\mathbf{G}_{\mathcal{N}}^m$ be an observed multilayer network and let $\widetilde{\mathbf{G}}_{\mathcal{N}}{(r)}$ be its adjusted aggregate network. Suppose that $\widetilde{\mathbf{G}}_{\mathcal{N}}{(r)}$ is connected, undirected, and non-bipartite. Let $k$ be the context size chosen for the {\bf Optimization} procedure. Then as $l \rightarrow \infty$,
	\begin{enumeratea}
		\item For all $p, q, r > 0$, 
		\begin{equation}\label{eq:res1}\dfrac{\#(w,c) |\mathcal{D}|}{\#(w) \#(c)} \stackrel{P}{\rightarrow} \dfrac{1}{2k}\dfrac{\sum_{j = 1}^k \left(\sum_u X_{w,u}\underline{P}^j_{c,w,u} + \sum_{u} X_{c,u} \underline{P}^j_{w,c,u}\right)}{\left(\sum_u X_{w,u} \right)\left(\sum_u X_{c,u} \right)}
			\end{equation}
		\item Let $\widetilde{\mathbf{P}} = \widetilde{\mathbf{D}}^{-1}\widetilde{\mathbb{A}}$. If $p = q = 1$,
		\begin{equation}\label{eq:res2}
			\dfrac{\#(w,c) |\mathcal{D}|}{\#(w) \#(c)} \stackrel{P}{\rightarrow} \dfrac{{\text{vol}}(\widetilde{\mathbf{G}}_\mathcal{N}(r))}{k}\left(\sum_{x = 1}^k \widetilde{\mathbf{P}}^k\right)\widetilde{\mathbf{D}}^{-1} \end{equation}
			\noindent for all $r > 0$.
	\end{enumeratea}

\end{theorem}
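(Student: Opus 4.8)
The plan is to reduce everything to a random-walk computation on the adjusted aggregate graph $\widetilde{\mathbf{G}}_{\mathcal{N}}(r)$ via Lemma~\ref{thm:equiv}, and then identify the limiting co-occurrence statistics through the ergodic theorem. By Lemma~\ref{thm:equiv}(a), the \textbf{NeighborhoodSearch} walk of multi-node2vec on $\mathbf{G}_{\mathcal{N}}^m$ has exactly the same law as the node2vec walk on $\widetilde{\mathbf{G}}_{\mathcal{N}}(r)$, so it suffices to analyze the latter. This walk is second order in the node variable, so first I would lift it to a first-order Markov chain on the state space of directed edges $(v,w)$ of $\widetilde{\mathbf{G}}_{\mathcal{N}}(r)$, whose transition kernel is exactly $\underline{\mathbf{P}}$. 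The hypotheses that $\widetilde{\mathbf{G}}_{\mathcal{N}}(r)$ is connected and non-bipartite, together with $p,q,r>0$ (which keeps every admissible edge transition strictly positive), make this lifted chain irreducible and aperiodic; hence it admits a unique stationary distribution, which is precisely the array $\mathbf{X}$ satisfying $\sum_w \underline{P}_{u,v,w}X_{v,w}=X_{u,v}$.

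With ergodicity in hand, the second step is to pass the empirical counts to their stationary limits as $l\rightarrow\infty$. For a single walk the strong law of large numbers for Markov chains gives almost-sure convergence of time averages to stationary expectations; boundary effects from the first and last $k$ positions contribute $O(k)$ out of $O(l)$ terms and vanish in the limit, and the $s$ walks may be pooled. Counting node–context pairs, each visited node serves as the center of $2k$ pairs, so $\#(w)/|\mathcal{D}| \probc \sum_u X_{w,u}$, the stationary marginal of being at node $w$, and likewise for $\#(c)$. For the joint count I would decompose $\#(w,c)$ by the lag $j\in\{1,\dots,k\}$ and by orientation: a context $c$ sitting $j$ steps after the center $w$ contributes the stationary probability of residing at $w$ (having arrived from some $u$) times the $j$-step transition to $c$, namely $\sum_u X_{w,u}\,\underline{P}^j_{c,w,u}$, while $c$ sitting $j$ steps before $w$ contributes the symmetric term $\sum_u X_{c,u}\,\underline{P}^j_{w,c,u}$. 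Summing over $j$ and dividing by the $2k$ normalization yields the limit of $\#(w,c)/|\mathcal{D}|$, and forming the ratio $\#(w,c)|\mathcal{D}|/(\#(w)\#(c))$ gives \eqref{eq:res1}, proving part (a).

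For part (b) I would specialize $p=q=1$. By Lemma~\ref{thm:equiv}(b) the walk degenerates to the DeepWalk first-order chain on $\widetilde{\mathbf{G}}_{\mathcal{N}}(r)$ with transition matrix $\widetilde{\mathbf{P}}=\widetilde{\mathbf{D}}^{-1}\widetilde{\mathbb{A}}$, so the second-order kernel loses its dependence on the predecessor: $\underline{P}^j_{c,w,u}=(\widetilde{\mathbf{P}}^j)_{w,c}$. The unique stationary distribution of a random walk on a connected, non-bipartite, undirected graph is $\pi_w=\widetilde{\mathbf{d}}_w/\mathrm{vol}(\widetilde{\mathbf{G}}_{\mathcal{N}}(r))$, and the induced edge array satisfies $\sum_u X_{w,u}=\pi_w$. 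Substituting these into \eqref{eq:res1} and invoking reversibility of $\widetilde{\mathbf{P}}$, i.e.\ $\pi_w(\widetilde{\mathbf{P}}^j)_{w,c}=\pi_c(\widetilde{\mathbf{P}}^j)_{c,w}$, collapses the two orientation terms into one; the $\pi_w$ factors cancel and the surviving $\pi_c^{-1}=\mathrm{vol}(\widetilde{\mathbf{G}}_{\mathcal{N}}(r))/\widetilde{\mathbf{d}}_c$ produces exactly the $\mathrm{vol}(\widetilde{\mathbf{G}}_{\mathcal{N}}(r))\,\widetilde{\mathbf{D}}^{-1}$ prefactor of \eqref{eq:res2}.

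The step I expect to be the main obstacle is establishing the convergence of the joint count $\#(w,c)/|\mathcal{D}|$ with the correct $\mathbf{X}$-weighted form. The difficulty is twofold: one must justify that the second-order walk genuinely admits $\mathbf{X}$ as the unique limit of its edge-lifted chain (existence and uniqueness are asserted but need the irreducibility and aperiodicity argument above), and one must contend with the fact that the $2k$ contexts attached to a given center are not independent, so the convergence of each lag-$j$ contribution must be argued through the joint ergodic behavior of the chain rather than by treating contexts as i.i.d.\ draws. Once these points are settled, the remaining work—pooling the walks, discarding boundary terms, and the reversibility simplification in part (b)—is routine algebra.
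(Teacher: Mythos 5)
Your proof is correct, but it takes a different route from the paper's, and the comparison is instructive. The paper disposes of this theorem in a single sentence: it invokes Lemma \ref{thm:equiv} to transfer the multi-node2vec walk to the adjusted aggregate graph $\widetilde{\mathbf{G}}_{\mathcal{N}}(r)$, and then cites Theorems 2.1--2.3 and result (8) of \cite{qiu2018network} verbatim, with no further argument. Your first step coincides exactly with the paper's reduction; after that, instead of citing, you re-derive the external results from scratch --- lifting the second-order walk to a first-order chain on directed edges, arguing irreducibility and aperiodicity from connectivity and non-bipartiteness (with $p,q,r>0$ keeping every admissible transition positive), passing the lag-$j$-decomposed co-occurrence counts to their stationary limits via the ergodic theorem, and collapsing the two orientation terms in part (b) through reversibility of $\widetilde{\mathbf{P}}$. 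These are precisely the ingredients of the proofs in \cite{qiu2018network}, so in substance you have unfolded the citation rather than found a genuinely new path; still, your unfolded version buys several things the paper's one-liner does not. It makes explicit where each hypothesis is used (the paper assumes connected, undirected, non-bipartite without ever saying why: these guarantee ergodicity of the edge-lifted chain and hence existence and uniqueness of $\mathbf{X}$), it correctly treats the dependence among the $2k$ contexts of a given center through the joint ergodic behavior of the chain rather than an i.i.d.\ shortcut, and your part (b) computation yields $\sum_{j=1}^{k}\widetilde{\mathbf{P}}^{j}$, which silently repairs a typo in the paper's display \eqref{eq:res2}, where the summand is written $\widetilde{\mathbf{P}}^{k}$ with a dangling index $x$; the form you derive matches the source result in \cite{qiu2018network}. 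What the paper's route buys in exchange is brevity and a clean separation of the multilayer-specific content (all concentrated in Lemma \ref{thm:equiv}) from known single-layer limit theory. Finally, the step you flagged as the main obstacle --- uniqueness of $\mathbf{X}$ and convergence of the joint counts with the correct $\mathbf{X}$-weighted form --- is exactly what the cited theorems supply, so your instinct about where the real mathematical work lies is accurate.
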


By applying the result of Lemma \ref{thm:equiv}, we can apply Theorems 2.1 - 2.3 and result (8) from \cite{qiu2018network} directly to prove the Theorem 3. Results (\ref{eq:res1}) and (\ref{eq:res2}) provide closed form limiting expressions for the matrix factorization problem in (\ref{eq:factorizing}). These results suggest the use of matrix factorization to identify features for a multilayer network; however, it should be noted that calculating and storing the second order transition probabilities $\underline{\mathbf{P}}$ and its stationary distribution $\mathbf{X}$ is computationally prohibitive. We do not consider such an algorithm in our current study but plan to address fast matrix factorization in future work.

\section{Discussion}

In this paper, we introduced the multi-node2vec algorithm, a fast network embedding technique for complex multilayer networks. This work motivates several areas of future work. For example, an important next step is to incorporate partial supervision for the detection of relevant features that depend on the application under investigation. Recent work like \cite{kipf2016semi} for semi-supervised feature engineering on static networks may provide a principled first step in the investigation for multilayer networks. We furthermore believe that it will be fruitful to thoroughly compare and contrast feature engineering methods like multi-node2vec with the results of multilayer community detection methods so as to better understand the discovered features. Furthermore, though not explicitly considered here, multi-node2vec is readily applicable to dynamic networks, an example of multilayer networks where the ordering of layers depends on time. This work will require incorporating appropriate notions of conditional dependence between the layers that replace the conditional independence assumptions applied here. Finally, our theoretical analysis of the multi-node2vec algorithm motivates further work in understanding the relationship between neural network algorithms with more traditional machine learning tasks such as matrix factorization. We believe that more work should be done in this area to fully understand the theoretical underpinnings of deep learning.

There are several aspects of the multi-node2vec algorithm that open up new directions of research. First, there is a need for an embedding procedure for dynamic networks that incorporates dependencies between each observed network in a sequence. This is particularly of interest for functional connectivity data observed through time, like the raw data considered in this paper. With multi-node2vec, a dynamic generalization is possible through the relaxing of simplifying assumptions (A1) and (A2) to allow for dependence between neighborhoods across time. Alternatively, one could directly construct an embedding algorithm for dynamic generative models like the family of temporal latent space models \cite{sewell2015latent} or temporal exponential random graph models \cite{hanneke2010discrete, lee2020varying}. Second, in this paper we treated a population as a collection of people across scans. However, in many instances multiple scans, like different tasks for example, are available for each individual. It would be interesting and perhaps very fruitful to obtain a multilayer embedding for each indivual instead using a multilayer network collected across all the available scans. Finally, much work has recently focused on mixture models for populations of networks, where the networks themselves cluster. One could account for such structure in multi-node2vec by again reworking the assumptions in (A1) and (A2) to account for differences between clusters of networks.

The multi-node2vec technique has potential for ground-breaking discovery in the study of functional connectivity. By specifying a multilayered framework that (i) models weighted networks, (ii) does not require temporal ordering of the layers, and (iii) is robust to noisy layers, multi-node2vec enables the study of networks that vary across individuals and cognitive tasks. Neuroscientists have very recently begun to utilize multilayer analyses \cite{betzel2016multi, bassett2015learning, bassett2011dynamic, muldoon2016network, braun2015dynamic}. The majority of this work has explored how community structure and network modules vary across time. For instance, one study showed that shifts in community structure across time predict differences in learning a visual-motor task \cite{bassett2015learning}. Indeed, network neuroscientists have lately called for greater emphasis on multilayer techniques, particularly those that do not require temporal ordering of layers, thus allowing for more comprehensive quantification of networks across samples \cite{muldoon2016network}. The multi-node2vec algorithm is a fully data-driven strategy with the capabilities to learn significant neurological variation among brains, and will progress the investigation of individual differences and disease.

\section*{Funding}

JDW gratefully acknowledges support on this project by the National Science Foundation grant NSF DMS-1830547.

\section*{Conflict of Interest}

All authors declare that are no conflicts of interest with the work presented in this manuscript.

%
%
%

\bibliographystyle{nws}


\section*{Appendix: Additional Results for Subnetwork Classification Study}

Below, we provide the AUC across competing methods for classification of functional subnetworks in healthy individuals. These results illustrate the results when the methods are applied across the population of healthy individuals (Figure \ref{fig:classification}), and when applied to the population of healthy individuals with 10 layers of noise added to the network (Figure \ref{fig:classification10}). These results complement those already provided and discussed in Section 4.3.
\begin{figure}[ht]
\centering
\includegraphics[width = .8\textwidth, angle = 270]{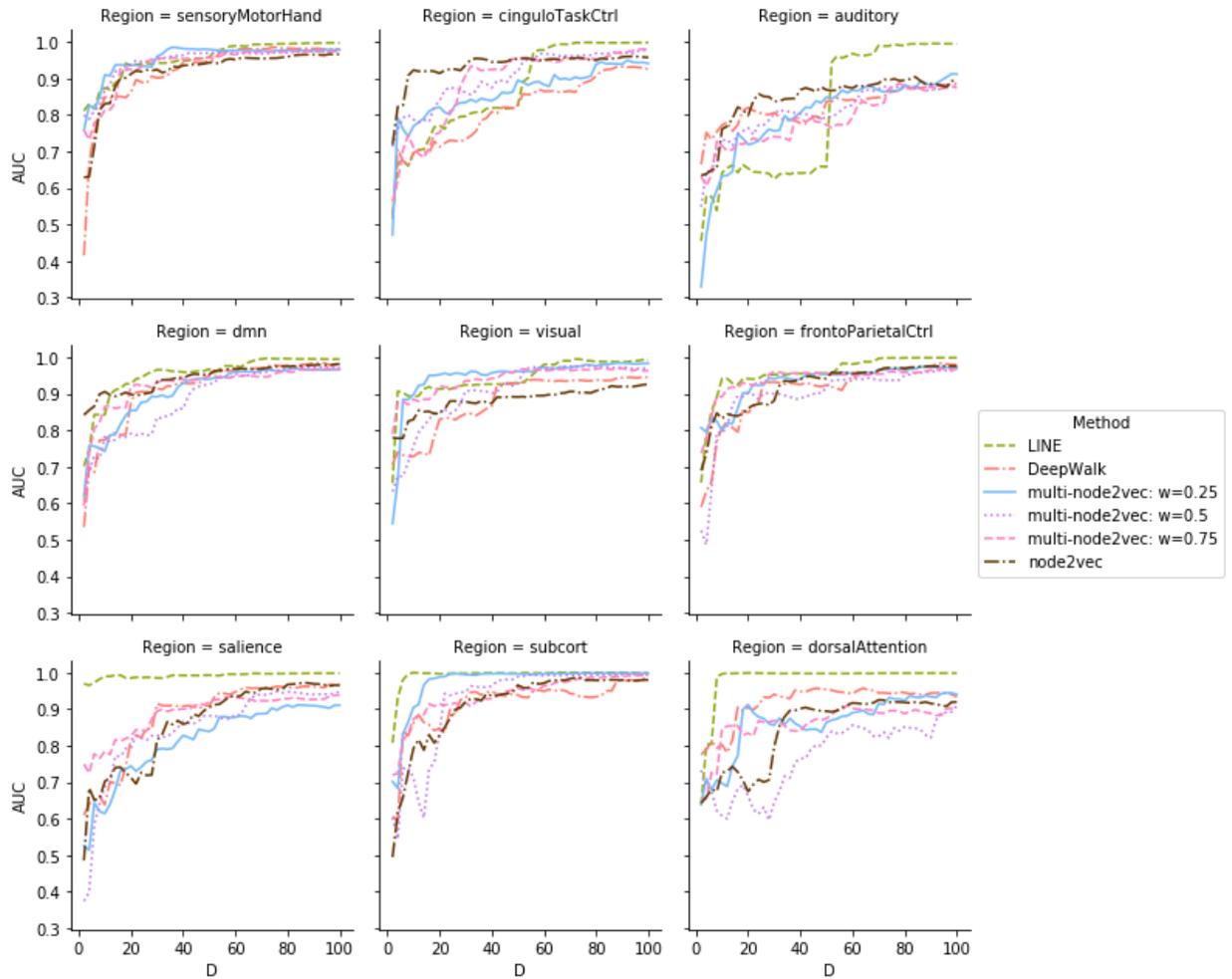}
\caption{The AUC of a one vs. all logistic regression classifier for the nine major functional subnetworks of the brain across 74 healthy individuals. Plots show the AUC of the classifier against the number of dimensions $D$ for feature representations from multi-node2vec, node2vec, DeepWalk, LINE, and the spectral decomposition. \label{fig:classification}}
\end{figure}

\begin{figure}[ht]
\centering
\includegraphics[width = .8\textwidth, angle = 270]{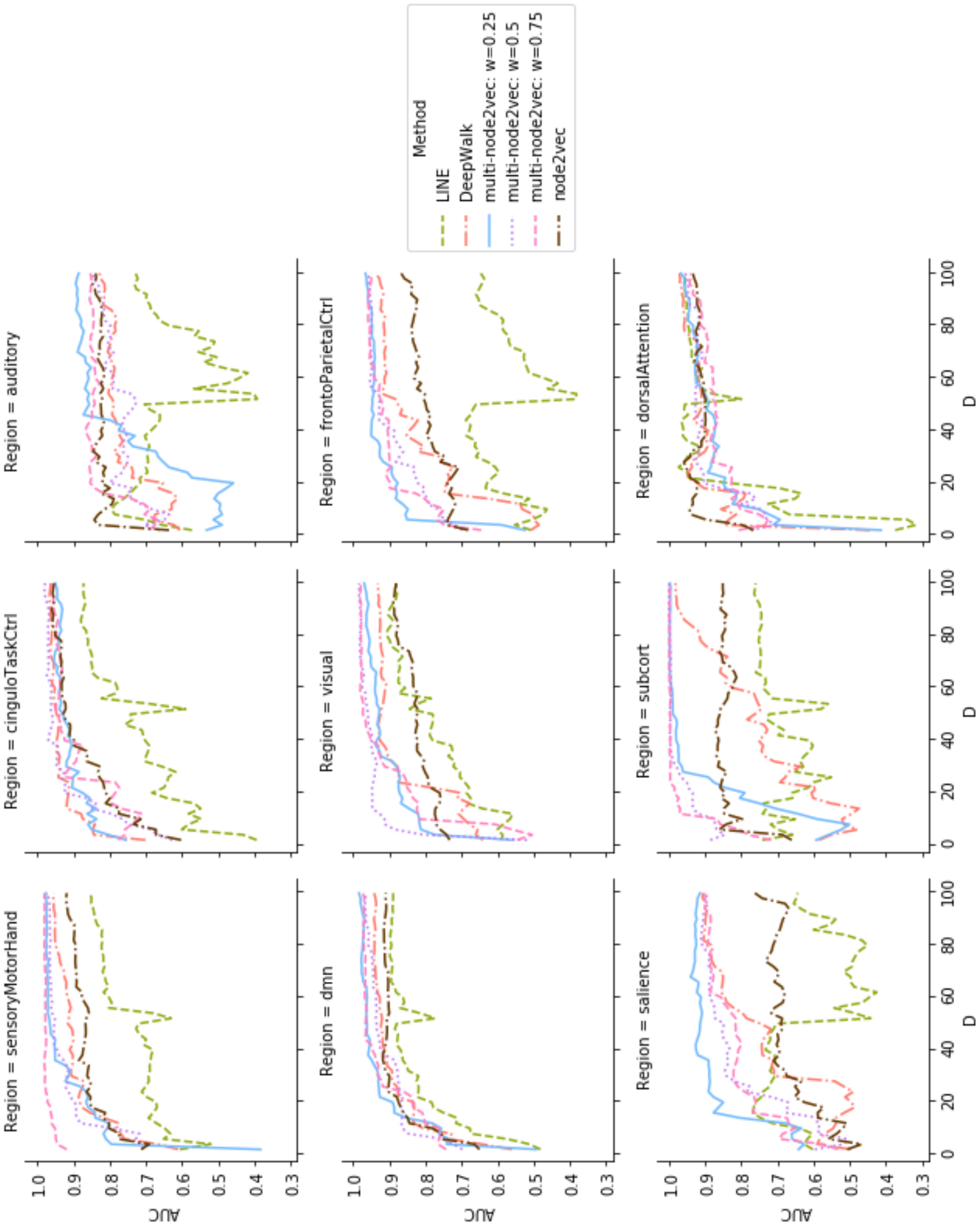}
\caption{The AUC of a one vs. all logistic regression classifier for the nine major functional subnetworks of the brain across all 74 healthy individuals and 10 layers of noise. Plots show the AUC of the classifier against the number of dimensions $D$ for feature representations from multi-node2vec, node2vec, DeepWalk, LINE and spectral decomposition. \label{fig:classification10}}
\end{figure}

\end{document}